\documentclass[11pt, letterpaper]{article}
\usepackage[utf8]{inputenc}

\usepackage{graphicx}
\PassOptionsToPackage{numbers}{natbib}
\usepackage{natbib}

\usepackage{amsthm}
\usepackage{algpseudocode}
\usepackage{algorithm}
\usepackage[group-separator={,}]{siunitx}
\usepackage{doi}
\usepackage{xcolor}
\usepackage{authblk}
\usepackage{a4wide}
\usepackage{amssymb}
\usepackage{amsmath}
\usepackage{booktabs}
\usepackage[capitalize]{cleveref}
\usepackage{thm-restate}
\usepackage{cleveref}

\usepackage{tikz}
\usetikzlibrary{arrows.meta,calc,positioning}
\usepackage{xcolor}
\usepackage{subcaption} 

\newtheorem*{invariant*}{Invariant}
\newtheorem{lemma}{Lemma}
\newtheorem{claim}{Claim}
\newtheorem{corollary}{Corollary}
\newtheorem{observation}{Observation}

\crefname{claim}{Claim}{Claims}

\newcommand{\OPT}{\text{\textsc{Opt}}}
\newcommand{\E}{\mathcal{E}}
\newcommand{\ALG}{\text{\textsc{Alg}}}
\newcommand{\I}{\mathcal{I}}
\newcommand{\A}{\mathcal{A}}

\definecolor{MyBlue}{RGB}{31,119,180}
\definecolor{MyRed}{RGB}{214,39,40}
\definecolor{MyGray}{gray}{0.7}

\newcommand{\drawInterval}[5][]{%
  \draw[line width=1pt, draw=#1, #2] (#3,#5) -- (#4,#5);
  \fill[draw=#1, fill=#1] (#3,#5) circle (1.2pt);
  \fill[draw=#1, fill=#1] (#4,#5) circle (1.2pt); 
}

\newcommand{\machineRow}[3]{%
  \draw[MyGray, line width=0.6pt] (0,#1) -- (#3,#1);
  \node[left=4pt] at (0,#1) {#2};
}

\usepackage{placeins}
\usepackage{enumitem}

\title{Fair Division Meets Scheduling: Approximately Envy-Free Interval Scheduling}

\author[1]{Sander Borst\thanks{ \href{mailto:sborst@mpi-inf.mpg.de}{sborst@mpi-inf.mpg.de}}}
\author[2]{Golnoosh Shahkarami\thanks{ \href{mailto:gshahkar@mpi-inf.mpg.de}{gshahkar@mpi-inf.mpg.de}}}
\author[3]{Rohit Vaish\thanks{ \href{mailto:rvaish@iitd.ac.in}{rvaish@iitd.ac.in}}}

\affil[1]{Max Planck Institute for Informatics, Saarland Informatics Campus}
\affil[2]{Max Planck Institute for Informatics, University of Bremen}
\affil[3]{Indian Institute of Technology Delhi}

\date{}

\begin{document}

\maketitle
\begin{abstract}
We study interval scheduling from the perspective of fair allocation.
There are $m$ identical machines and a set of intervals, each specified by a start time, an end time, and a nonnegative weight.
A schedule assigns a subset of the intervals to the machines so that no two intervals on the same machine overlap, and the goal is to maximize the total weight of scheduled intervals.
Viewing machines as agents and intervals as goods, we require the schedule to be envy-free up to one item (EF1), and we measure efficiency against the offline optimum without fairness.

In the offline setting, we give an algorithm that computes an EF1 schedule whose loss is at most a factor of $3/2$ in the unweighted regime, and we prove lower bounds of $\frac{3m-2}{2m-1}$, approaching $3/2$, in both the unweighted and the unit-length weighted regimes, so the price of fairness is $3/2$ in the limit.
In the online setting, intervals arrive in nondecreasing order of start times; an arriving interval must be accepted or rejected, rejections are irrevocable, and an accepted interval may be revoked, and lost, at any time before it ends.
For the unweighted regime we present Greedy-Balanced, a simple algorithm that maintains EF1 at every point in time and is $(2-\tfrac{1}{m})$-competitive against the offline optimum without fairness, and we prove a matching lower bound for every deterministic algorithm; the optimal deterministic fair competitive ratio is thus exactly $2-\tfrac{1}{m}$.
Experiments on real-world benchmark instances show that Greedy-Balanced performs well beyond its worst-case guarantee, with an observed ratio never exceeding $1.306$.
\end{abstract}

\section{Introduction}

Interval scheduling is a fundamental optimization problem in which one must assign time intervals, each specified by a start time and an end time, to identical machines so that intervals assigned to the same machine do not overlap, while maximizing the total value of the accepted intervals~\citep{KolenLenstraPapadimitriouSpieksma2007,KOVALYOV2007331}. It arises naturally in two-sided service platforms. Consider a ride-hailing service such as Uber or Lyft, a food-delivery platform such as Deliveroo, or a freelance marketplace such as Upwork: requests for service arrive over time, each with a fixed window during which it must be handled, and the platform assigns each accepted request to one of several available workers. Since a worker cannot handle two overlapping requests, the set of requests served by any single worker must be conflict-free, which is precisely the feasibility constraint of interval scheduling, with workers playing the role of machines.
On such platforms, the classical objective of maximizing the number of served requests is not the only concern. The workers who fulfill these requests are stakeholders in their own right: since each accepted request represents one unit of paid work, workers reasonably expect comparable access to these opportunities. The concern persists when the compensation structure is reversed: in settings where workers receive a fixed salary, a worker cares not about obtaining more requests, but about not being assigned substantially more work than others for the same pay. Our setting covers both readings, treating requests either as goods that workers compete for or as chores that they would rather avoid. An efficiency-maximizing assignment may pile many requests onto a few workers while leaving others idle, even though all workers are equally capable. When machines represent workers, or more generally clients or organizational units sharing a common infrastructure, such concentration is naturally perceived as unfair. This tension is inherent to the two-sided structure: efficiency is measured over the requests we accept, whereas fairness is measured over the workers to whom we assign them. It motivates the central question:
\begin{quote}
\emph{Can one combine the efficiency of interval scheduling with the fairness guarantees studied in fair division, while respecting the underlying feasibility constraints?}
\end{quote}

The offline and online versions of interval scheduling have both been studied extensively. In the offline problem, all intervals are known in advance; the problem dates back to the 1950s, stemming from the works of Dantzig, Ford, and Fulkerson~\citep{DF54minimizing,FRO+24flows}, and has numerous applications including crew scheduling~\citep{BC98dynamic}, telecommunications~\citep{BNC+99bandwidth}, satellite photography~\citep{G95scheduling}, VLSI layout design~\citep{GLL79optimal}, and computational biology~\citep{CLR+05more}. In the online problem, intervals arrive sequentially in nondecreasing order of start times, and the algorithm must decide without knowledge of future arrivals which intervals to keep~\citep{Lipton1994OnlineIS}. We study a \emph{revocable} online model in which accepted intervals can be revoked at any time before their completion. This model captures settings with soft reservations or provisional allocations, where a task can be preempted or withdrawn before service is completed, but once the opportunity is declined it is lost. The performance of an online algorithm is measured by the \emph{competitive ratio}~\citep{ST85amortized}, comparing it to an offline algorithm with full knowledge of the input.

More broadly, interval scheduling can be understood as the allocation of time-indexed opportunities over shared resources, which naturally raises concerns about fairness alongside the classical goal of efficiency. Our work focuses on the \emph{unweighted} setting on identical machines, which is the simplest framework for integrating concepts from fair division and offers a relevant context for analyzing the online cost of fairness. In more complex weighted scenarios, the online interval scheduling problem does not permit a constant competitive ratio, making the unweighted setting ideal for isolating and quantifying the efficiency loss due to fairness. 

From this perspective, fair division guarantees can be imposed on temporal allocation problems as explicit constraints, with interval scheduling providing a clean first instantiation.

To reason about fairness, we draw on the area of \emph{fair division}, which provides a principled framework for allocating resources among agents~\citep{BT96fair,M04fair,BCE+16handbook}. A central notion in this literature is \emph{envy-freeness}~\citep{F67resource,V74equity}, which requires that no agent should prefer another agent's allocation over its own. For indivisible goods, exact envy-freeness may be impossible, so one turns to its relaxations. 
A well-studied notion of approximate fairness is 
\emph{envy-freeness up to one item} (EF1)~\citep{B11combinatorial}, which allows pairwise envy to be eliminated by removing a single item from the envied bundle. It is known that an EF1 allocation always exists for discrete goods~\citep{LMM+04approximately}. In our setting, we consider intervals as the goods and machines as the agents, and EF1 requires that any envy between two machines can be resolved by hypothetically removing one interval from the schedule with a higher value.

Our goal is not to replace the scheduling objective with a fairness objective; instead, we aim to combine both perspectives. Traditional approaches to fairness in optimization usually incorporate fairness directly into the objective, often using egalitarian or max-min criteria. In contrast, we impose a fair-division guarantee as a \emph{hard constraint} on the scheduling problem and compare the resulting efficiency to the unconstrained optimum. This approach provides a clean way to quantify the cost of fairness while preserving the combinatorial structure of interval scheduling.

In the offline setting, we measure this loss through the \emph{price of fairness}~\citep{BFT11price,CKK+12efficiency}, namely the worst-case ratio between the value achieved by a fair schedule and that of an optimal schedule without fairness constraints. In the online setting, the benchmark must additionally account for the uncertainty of future arrivals. For this reason, we formulate the \emph{fair competitive ratio}, defined as the worst-case ratio between the value achieved by a fair online algorithm and that of an optimal offline algorithm without fairness constraints. This benchmark captures both sources of degradation: the loss imposed by fairness and the additional loss due to online uncertainty.

With this framework in place, we study interval scheduling under EF1 in both offline and online settings, establish tight worst-case guarantees, and determine how much efficiency must be sacrificed to obtain fair allocations over time.

\subsection{Our Contributions and Techniques}

At a high level, our work can be viewed from two complementary perspectives. First, we can consider interval scheduling under EF1 as a \emph{fair allocation problem with scheduling constraints}. In this context, machines act as agents, intervals represent items, and the feasibility of a schedule is determined by the requirement that only non-conflicting items are assigned to each agent. Alternatively, we can see it as a \emph{scheduling problem with a fairness constraint}, where the goal is to maximize total accepted weight while adhering to EF1. The first perspective links our research to the fair division literature, while the second focuses on approximation guarantees and potential efficiency loss. 
A central theme of our paper is to reconcile these two viewpoints by allowing for partial allocation and providing a clear understanding of the efficiency trade-offs associated with achieving fairness.

\paragraph{Offline Version as a Constrained Fair Allocation Problem.}
We start by considering the offline version of our problem. In the absence of fairness constraints, the offline interval scheduling problem on identical machines can be solved optimally via polynomial-time algorithms~\citep{arkin_scheduling_1987, bouzina_interval_1996}. When fairness constraints are imposed, our problem can be viewed as a fair allocation problem with feasibility constraints~\citep{S21constraints}. Prior work on this topic, especially in the context of interval scheduling, has focused on \emph{existential} questions pertaining to EF1 allocations alongside an efficiency requirement such as \emph{maximality}~\citep{KEG+24fair,IMY25dividing,EGI+26fair}. (A maximal schedule is one where no unallocated interval can be feasibly assigned to any machine.) 
In contrast, we focus on optimizing a well-studied efficiency objective while allowing for partial allocations. Our goal is to maximize the total weight of accepted intervals subject to the EF1 constraint. 
This objective is strictly stronger than maximality, as any maximum (or efficient) allocation is also maximal. Our offline algorithms and lower bounds quantify the precise efficiency loss incurred by enforcing EF1 (i.e., the price of fairness) under scheduling constraints, yielding tight constant-factor bounds summarized in Table~\ref{tab:unit-weight}.

\paragraph{Online Arrival as an Additional Dimension.}
Next, we study how the fairness constraint interacts with online arrivals, introducing an additional layer of complexity beyond offline feasibility. 
In the online model, intervals arrive in nondecreasing order of their start times. Upon arrival, each interval must either be accepted or rejected. Once an interval is rejected, that decision is final; however, an accepted interval can be revoked at any time before it ends. A fair online algorithm must make decisions without knowledge of future inputs while adhering to fairness constraints. This setup allows us to isolate the efficiency loss caused by online uncertainty, over and above the inherent loss associated with enforcing EF1 fairness.

Our results in the online setting are closely related to the literature on \emph{temporal fair division}~\citep{ELL+25temporal,CES25temporal}. This line of work studies fairness guarantees that must be upheld at every time step, but typically assumes complete knowledge of inputs in advance. In contrast, we require EF1 to be maintained at all times while operating online, without access to future intervals. By establishing tight bounds for deterministic online algorithms, we clarify the extent of additional efficiency that must be sacrificed to uphold fairness when decisions are made irrevocably and under uncertainty.

\paragraph{Valuation Models and Results Overview.}
Previous research in online interval scheduling has shown that no constant-factor competitive guarantees are possible, even on a single machine and in the absence of fairness constraints, when intervals can have arbitrary weights~\citep{WOEGINGER19945}. Therefore, we focus on identical machines primarily in the \emph{unweighted} setting, where all intervals have the same weight. In the fair allocation literature, this is often referred to as \emph{uniform valuations} and is closely related to the equitable coloring problem~\citep{HS87proof}. 

Although uniform valuations may appear simple in the context of fair division, they represent a \emph{central and well-studied} model in scheduling, particularly in the online setting. In this scenario, maximizing the total accepted weight reduces to maximizing the total number of accepted intervals. This is a classic objective that remains challenging under online arrivals and feasibility constraints. Furthermore, under uniform valuations, the EF1 requirement coincides with the stronger notion of envy-freeness up to \emph{any} item (EFX), as all items have equal value. Within this setting, we obtain tight constant-factor guarantees in both the offline and online models; summarized in Table~\ref{tab:unit-weight}.

In addition to examining uniform valuations, we ask whether enforcing EF1 allows for a constant efficiency loss under more general valuations. We show that in the offline setting when weights take a constant number of distinct values, it is possible to achieve constant-factor EF1 approximations. Furthermore, we establish that the $3/2$ lower bound continues to hold even in the highly structured \emph{unit-length} setting with arbitrary weights. This shows that efficiency loss is due to the fairness constraint, not interval lengths or valuation complexity. 
In the deterministic online setting, an additional loss occurs due to sequential arrivals, which leads to a tight competitive ratio of $2-\tfrac{1}{m}$. Together, these results provide a clear separation between the efficiency loss caused by fairness and the loss associated with online uncertainty.

Although we state our results for goods with nonnegative weights, the unweighted results also transfer to the chore setting: under uniform valuations, EF1 again reduces to the same load-balancing condition, namely that any two machines receive numbers of intervals differing by at most one.

We also experimentally evaluate our online algorithm on real-world benchmark instances. These experiments show that the algorithm consistently outperforms its theoretical guarantees in practice. This suggests that while the worst-case competitive ratio is tight, the algorithm performs significantly better on typical instances.

\begin{table}[!h]
\centering

{\renewcommand{\arraystretch}{1.25}
\begin{tabular}{@{}lcc@{}} 
\toprule
& \multicolumn{1}{c}{Upper Bound} & \multicolumn{1}{c}{Lower Bound} \\ \midrule
Competitive Ratio (Online, No Fairness)    
& $1$ \cite{faigle_note_1995} 
& $1$ \\

Price of Fairness (Offline)             
& $\tfrac{3}{2}$ \scriptsize[T.~\ref{thm:offline-ef1-upper}] 
& $\tfrac{3}{2}$ \scriptsize[T.~\ref{thm:offline-ef1-lower}] \\

Fair Competitive Ratio (Online + Fair)              
& $2 - \tfrac{1}{m}$ \scriptsize[T.~\ref{thm:gb-ef1-competitiveness}] 
& $2-\tfrac{1}{m}$ \scriptsize[T.~\ref{thm:lb-deterministic-ef1}]  \\ 
\bottomrule

\end{tabular}
}
\caption{Summary of tight guarantees in the unweighted (uniform valuation) setting.
The optimal competitive ratio of $1$ for online interval scheduling without fairness is due to~\cite{faigle_note_1995}. All ratios are measured against the offline optimum \emph{without} fairness.}
\label{tab:unit-weight}
\end{table}

\paragraph{Technical Overview.}
With unit weights, EF1 has a clean characterization: a schedule is EF1 precisely when the numbers of intervals on any two machines differ by at most one (\Cref{obs:ef1-loads}).
So fairness never dictates which intervals a machine gets, only how many, and all of our arguments are really about one question: where does this count-balance requirement force the loss of intervals that an unconstrained scheduler would keep?

In the offline setting the answer is local.
We start from an efficiency-optimal schedule that ignores fairness, rebalance it by moving intervals to underloaded machines where no conflict interferes, and resolve the remaining conflicts by swaps that trade intervals across machines; each swap sacrifices at most one interval, which yields the upper bound, and a matching lower bound shows these sacrifices are unavoidable.

In the online setting our algorithm, Greedy-Balanced, separates efficiency from fairness as follows.
It runs in the background a classical revocation-based greedy that is optimal for unweighted interval scheduling without fairness, and uses it as a filter: an arriving interval is considered only if the benchmark accepts it.
Placement then follows a simple principle: if the interval fits on a least-loaded machine without violating EF1, allocate it there; otherwise, repair the schedule by replacing, on a least-loaded machine, a conflicting interval with a later right endpoint, which makes the light machines more available for future arrivals by improving the endpoints of what they hold; if neither is possible, reject.
The analysis of this algorithm is where the main technical machinery of the paper lives, and it rests on three devices: a decomposition of the execution into \emph{time blocks}, an \emph{auxiliary graph} that records where each loss happened, and a \emph{charging argument} on this graph that caps the losses per block.
We describe each in turn.

\emph{Time blocks.}
We cut the execution at the moments $t_k$ when the minimum machine load first reaches $k$, and call the period between $t_k$ and $t_{k+1}$ a block.
Blocks are the natural unit of progress: within one block the fairness constraint has a frozen shape, machines at the minimum load are the only ones allowed to grow, and by the end of the block every machine has grown by exactly one, so the algorithm collects exactly $m$ intervals per block.
The whole analysis then reduces to a single question about a single block: how many intervals that the benchmark keeps can the algorithm be forced to give up before the minimum load rises?

\emph{The auxiliary graph.}
To answer it we build, for each block, a graph that records where each loss happened.
Its vertices are the machines, and it has one edge per \emph{loss event} of the block, the moments at which the algorithm discards an interval that survives in the benchmark's final schedule.
The edge of a loss connects two machines in distinct roles: its \emph{receiver}, the light machine on which the algorithm placed the arriving interval (a dummy node when the interval was rejected outright), and its \emph{provider}, a machine that could have accepted the arriving interval had fairness not been in the way, as it has no conflict with it but is already heavy.
Every loss of the block is now captured by an edge, so bounding the losses becomes a purely combinatorial question: how many edges can this graph contain?

\emph{The charging argument.}
The answer comes from a charging rule of pleasant simplicity: each receiver is charged the last incoming edge it receives, and every other edge is charged to its provider.
Two exchange properties of the graph make this work: when a provider creates a new outgoing edge, its previous receiver must already have turned heavy and can receive no further edge, and when a receiver acquires a second incoming edge, its previous provider can create no further edge.
Together they imply that every edge charged to a provider is the last edge leaving it, so every edge is the last edge of the vertex that pays for it, and each machine pays at most once.
Moreover, some machine pays nothing: if every machine were charged, every machine would already be heavy before the last loss of the block, so the block would have ended earlier.
Hence a block has at most $m-1$ losses against exactly $m$ gains, and summing over blocks gives the fair competitive ratio $2-\tfrac{1}{m}$; the per-block bound is attained, so the analysis of the algorithm is tight.
\Cref{fig:overview-charging} illustrates the rule.

\begin{figure}[t]
\centering
\begin{tikzpicture}[
    x=1cm,y=1cm,
    provider/.style={circle, draw, minimum size=6.5mm, inner sep=0pt},
    chargedprovider/.style={circle, draw, minimum size=6.5mm, inner sep=0pt, line width=1pt, font=\bfseries},
    chargedreceiver/.style={circle, draw, minimum size=6.5mm, inner sep=0pt, line width=1pt, font=\bfseries},
    dummy/.style={rectangle, draw, rounded corners=1pt, minimum width=6.5mm, minimum height=6.5mm, inner sep=1pt, fill=gray!15},
    >=stealth
]
\node[chargedprovider] (p1) at (0,2.1)  {$p_1$};
\node[provider]        (p2) at (0,0.9)  {$p_2$};
\node[chargedprovider] (p3) at (0,-0.3) {$p_3$};
\node[chargedreceiver] (r1) at (3.5,1.5) {$r_1$};
\node[dummy]           (d)  at (3.5,0.0) {$\bot$};
\draw[->, very thick, green!60!black] (p1) -- node[above, font=\scriptsize, black] {1} (r1);
\draw[->, very thick, blue] (p2) -- node[above, font=\scriptsize, black, pos=0.55] {2} (r1);
\draw[->, very thick, green!60!black] (p3) -- node[below, font=\scriptsize, black, pos=0.55] {3} (d);
\node[font=\small] at (0,2.8) {providers};
\node[font=\small] at (3.5,2.8) {receiver};
\node[font=\small] at (3.5,-0.62) {dummy node};
\end{tikzpicture}
\caption{The charging rule for $m=4$; edges are numbered by creation time.
Edge~2, the last incoming edge of $r_1$, is charged to $r_1$ (blue); edges~1 and~3 are charged to their providers (green), and each is the last edge leaving its provider.
Every machine pays at most once, and $p_2$, whose only edge is claimed by its receiver, pays nothing, so the block has at most $m-1=3$ losses.}
\label{fig:overview-charging}
\end{figure}

\emph{The lower bound: one machine that starves itself.}
The matching lower bound shows that no deterministic online EF1 algorithm beats $2-\tfrac1m$, so Greedy-Balanced is optimal.
The adversary also works in blocks, though of a different kind: disjoint time windows fixed in advance rather than periods defined by the algorithm's behavior.
In each block it watches one fixed machine, say machine~$1$, and repeatedly releases the first half of a shrinking window: if machine~$1$ takes the interval, the next one is released nested inside it, so accepting it forces machine~$1$ to revoke what it just took; if machine~$1$ passes, the window moves past the interval, which then conflicts with nothing that comes later.
A block ends once machine~$1$ has accepted $m$ times, keeping only the innermost interval, while the optimum keeps everything: the nested chain fits across the $m$ machines, and each passed interval ends exactly where the next one begins.
Refusing to engage does not help, since EF1 keeps every other machine within one interval of machine~$1$, so further releases are pure gifts to the optimum.
Each block therefore hands the optimum $m-1$ intervals more than the algorithm, and over many blocks the ratio tends to $2-\tfrac1m$.

\subsection{Related Work}
The online interval scheduling problem was introduced by~\citet{Lipton1994OnlineIS}. In their model, the intervals arrive in the order of their start times, and must be accepted or rejected irrevocably upon their arrival, following a non-overlap constraint. Early work on this problem showed that no deterministic algorithm can achieve a constant competitive ratio for irrevocable scheduling on a single machine~\citep{MSS-Long-1993}. Similarly, for randomized algorithms, a lower bound of $\mathcal{O}(\log \Delta)$ was shown, where $\Delta$ is the ratio of the length of the longest and shortest intervals~\citep{Lipton1994OnlineIS}.
Subsequent studies explored the idea of \emph{revoking} previously accepted intervals, or \emph{preemption}, to achieve improved competitive guarantees~\citep{faigle_note_1995,GARAY1997180,TOMKINS1995173,10.1007/978-3-031-49815-2_13}. In particular, \citet{faigle_note_1995} studied revocable online interval scheduling for multiple identical machines. They showed that a simple greedy ``deadline replacement'' strategy achieves a competitive ratio of $1$ in the unweighted case.

In the \emph{weighted} setting, arbitrary weights preclude finite competitive ratios in general, both for deterministic and randomized algorithms \citep{WOEGINGER19945,CanettiIrani1998}. Positive results have been shown under structured settings, such as when the weight of each interval is correlated with its length~\citep{WOEGINGER19945}. 

\citet{fung_-line_2012} study online scheduling of \emph{unit-length} intervals with \emph{arbitrary weights} on multiple identical machines, giving a greedy algorithm with competitive ratio approaching $2$, which matches the lower bound without fairness constraints~\citep{fung2008online}. We refer the reader to the surveys by \citet{KolenLenstraPapadimitriouSpieksma2007} and \citet{KOVALYOV2007331} for a comprehensive overview of interval scheduling and related allocation problems. 
Recent research has focused on online interval scheduling on a single machine within a \emph{learning-augmented} framework, also referred to as algorithms with predictions~\citep{10.1007/978-3-031-38906-1_14, karavasilis2025intervalselectionbinarypredictions, antoniadis2025switchingframeworkonlineinterval}. This area of study aims to develop online algorithms whose competitive performance improves when the predictions are accurate and degrades gracefully when the predictions are less reliable.

We now turn our attention to the literature on fair division. In the \emph{offline} setting, recent works have studied the fair allocation of discrete items under conflict constraints~\citep{LLZ21fair,HH22fair,KEG+24fair,IMY25dividing,EGI+26fair}. In this context, all items and their pairwise conflicts are provided to the algorithm in advance. Due to the offline nature of the problem, the focus shifts from competitive analysis to investigating the \emph{existence} of a desirable, feasible allocation; specifically, one that is approximately envy-free and economically efficient~(e.g., maximal or Pareto optimal). 

The \emph{online} fair division problem has been extensively studied in recent years. Several works have explored achieving either exact or approximate envy-freeness alongside Pareto optimality as indivisible items are presented over time~\citep{AAG+15online, AW19strategy, HPP+19achieving,AW20online, BKP+24fair,HHI+24class,KMN+26online}. With the exception of \cite{HPP+19achieving}, these studies require that the items be assigned irrevocably. Additionally, some research focuses on the \emph{repeated} fair division problem, where the same set of items is presented to the same set of agents at each time step~\citep{ILN+24repeated, CN24repeatedly}. 
The studies by~\cite{ELL+25temporal} and \cite{CES25temporal} focus on the problem of \emph{temporal fair division}. In this context, the algorithm is aware of the order in which items will arrive and must ensure that the allocation remains ``anytime fair'', i.e., satisfies approximate envy-freeness cumulatively at every time step. \cite{CL26temporal} study a generalization of this model where an item arriving at time $t$ can be assigned anytime in the next $r$ steps. Notably, these works do not take conflict constraints into account.

\section{Preliminaries}
\label{sec:prelim}
\paragraph{Interval Scheduling Model.}
We consider interval scheduling on $m$ identical machines. The input consists of a set $\I$ of $n$ intervals, where each interval $I_j \in \I$ is specified by a start time $s_j$, an end time $e_j$, and a nonnegative weight $w(I_j)$. We identify each interval with the half-open set $[s_j,e_j)$, and we say that two intervals \emph{overlap} (or \emph{conflict}) if they intersect. We assume all start and end times are nonnegative rationals in the range $[0,T]$ for some $T\in\mathbf{Q}_{\geq 0}$. The length of an interval is $l_j = e_j - s_j$. A schedule assigns accepted intervals to machines such that no two intervals assigned to the same machine overlap; intervals assigned to different machines may overlap.
\paragraph{Offline and Online Models.}
In the \emph{offline} setting, the entire set $\I$ is known in advance. In the \emph{online} setting, intervals arrive sequentially in nondecreasing order of their start times. When an interval $I_j$ arrives at time $s_j$, the algorithm learns $(s_j,e_j)$ and $w(I_j)$, and must immediately decide whether to accept or reject it, without knowledge of future intervals. Once an interval is rejected, this decision is irrevocable. We allow \emph{revocation} in the online model: an accepted interval may be removed from the schedule at any time before its end time $e_j$. A revoked interval cannot be scheduled again; from that point on, it is treated as rejected.
\paragraph{Machines and Assignments.}
Scheduling takes place on $m$ identical machines, which we denote by $\A=\{1,2,\dots,m\}$. If an interval is accepted, it must be assigned to a machine $i\in\A$. We assume without loss of generality that intervals are indexed in order of arrival, so that $s_1 \le s_2 \le \dots \le s_n$. For a given schedule, we denote by $S_i$ the set of intervals assigned to machine $i$, and we refer to $|S_i|$ as the load of machine $i$.
\paragraph{Efficiency Objective.}
For a set $X$ of intervals, we write $w(X)=\sum_{I_j\in X} w(I_j)$ for its total weight. For an algorithm $\ALG$ and an instance $\I$, we denote by $\ALG(\I)$ the set of intervals in the final schedule of $\ALG$, that is, the intervals that $\ALG$ accepts and never revokes, and we write $|\ALG(\I)|$ for the total weight $w(\ALG(\I))$. The goal is to maximize the total weight of the final schedule. In the \emph{unweighted} setting, all intervals have unit weight, and $|\ALG(\I)|$ is simply the number of scheduled intervals. This setting is also referred to as uniform valuation. We denote by $\OPT$ the optimal offline algorithm without fairness constraints, so that $\OPT(\I)$ is an optimal schedule of instance $\I$ and $|\OPT(\I)|$ is its total weight.
\paragraph{Performance Measures.}
We now define the measures used to evaluate the performance of our algorithms in the offline and online settings.
\begin{itemize}
     \item{\emph{Competitive Ratio}}: In the \emph{online} setting, the algorithm is provided the intervals sequentially. The performance of a (possibly unfair) online algorithm $\ALG^\texttt{online}{}$ is measured by the \emph{competitive ratio}, defined as the smallest $\alpha$ such that for every input instance $\I$ we have $|\OPT(\I)| \le \alpha \cdot |\ALG^\texttt{online}{}(\I)|$. An algorithm is \emph{$c$-competitive} if its competitive ratio is at most $c$.
     \item{\emph{Price of Fairness}}: In the \emph{offline} setting, the algorithm is provided all intervals upfront. The performance of an offline fair algorithm $\ALG{}^\texttt{offline-fair}$ is measured by its price of fairness, defined as the smallest $\alpha$ such that for every input instance $\I$ we have $|\OPT(\I)| \le \alpha \cdot |\ALG{}^\texttt{offline-fair}(\I)|$.
     \item{\emph{Fair Competitive Ratio}}: This performance measure evaluates the efficiency loss in the presence of both \emph{fairness} and \emph{online} constraints. Let $\ALG^\texttt{online-fair}{}$ denote an online algorithm that satisfies EF1. Then, the \emph{fair competitive ratio} of such an algorithm is defined as the smallest $\alpha$ such that for every input instance $\I$ we have $|\OPT(\I)| \le \alpha \cdot |\ALG^\texttt{online-fair}{}(\I)|$.
\end{itemize}
\paragraph{Benchmark.}
Throughout the paper, all performance guarantees are stated with respect to the offline optimum $|\OPT(\I)|$ computed without fairness constraints. Let $\E$ denote the greedy online algorithm of \citet{faigle_note_1995}. In the unweighted setting, $\E$ attains the offline optimum, i.e., $|\E(\I)| = |\OPT(\I)|$.
\paragraph{Fairness Notions.}
In addition to efficiency, we study fairness across machines. We view each machine as an agent and the intervals assigned to it as the agent's bundle. The utility that machine $i$ derives from a set $S$ of intervals is additive, i.e., $u_i(S) = \sum_{I_j \in S} w(I_j)$. We adopt envy-based fairness notions from the fair division literature. A schedule is \emph{envy-free up to one interval (EF1)} if for every pair of machines $i,k\in\A$ with $i\neq k$ and $S_k \neq \emptyset$, there exists an interval $I\in S_k$ such that $u_i(S_i) \ge u_i(S_k) - w(I)$. A schedule is \emph{envy-free up to any interval (EFX)} if for every pair $i,k\in\A$ with $S_k \neq \emptyset$ and every interval $I\in S_k$, $u_i(S_i) \ge u_i(S_k) - w(I)$. In the unweighted setting, EF1 and EFX coincide. These notions are inspired by the \emph{envy-freeness up to one good}~\citep{LMM+04approximately,B11combinatorial} and \emph{envy-freeness up to any good}~\citep{CKM+19unreasonable} notions in the fair division literature.
\begin{observation}\label{obs:ef1-loads}
In the unweighted setting, a schedule is EF1 if and only if the loads of any two machines differ by at most one.
\end{observation}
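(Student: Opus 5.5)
The plan is to unfold the definition of EF1 in the unweighted setting. Here every interval has weight $1$, so $u_i(S)=|S|$ for every machine $i$ and every set $S$. In addition, $w(I)=1$ for every interval $I$. The EF1 condition for an ordered pair $(i,k)$ with $i\neq k$ and $S_k\neq\emptyset$ then becomes: there exists $I\in S_k$ with $|S_i|\ge |S_k|-1$. Since $S_k$ is nonempty, such an $I$ exists exactly when $|S_i|\ge|S_k|-1$; the choice of $I$ is irrelevant. This already shows that EF1 and EFX coincide here, as remarked.

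For the forward direction, suppose the schedule is EF1 and take two machines $i,k$. Without loss of generality $|S_k|\ge|S_i|$. If $|S_k|=0$, then both loads are $0$ and there is nothing to show. Otherwise $S_k\neq\emptyset$, and the reduced condition gives $|S_k|-|S_i|\le 1$, so the loads differ by at most one.

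For the converse, assume $\bigl||S_i|-|S_k|\bigr|\le 1$ for all $i,k$. Fix an ordered pair $(i,k)$ with $i\neq k$ and $S_k\neq\emptyset$, and pick any $I\in S_k$. Then $u_i(S_i)=|S_i|\ge|S_k|-1=u_i(S_k)-w(I)$, so the EF1 requirement holds for this pair.

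There is no real obstacle here. The only points to handle carefully are the empty-bundle case, which the EF1 definition excludes and which is covered by the WLOG ordering above, and the fact that the condition has to be checked for ordered pairs. Both are handled by the symmetric argument.
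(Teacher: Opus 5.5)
Your proof is correct, and it is the direct unfolding of the EF1 definition with unit weights that the paper leaves implicit; the paper states this as an observation without writing out a proof. Reducing the condition to $|S_i|\ge|S_k|-1$ for every ordered pair with $S_k\neq\emptyset$, and handling the empty-bundle case through your ordering of the two loads, covers both directions completely.
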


\section{Offline EF1 Scheduling}
\label{sec:offline}
We begin by studying the offline version of interval scheduling under EF1 fairness, where the entire set of intervals is known in advance. As discussed earlier, this problem can be viewed both as a fair allocation problem with scheduling constraints and as a scheduling problem with a fairness constraint. A central question in this setting is how much efficiency must be sacrificed in order to guarantee EF1. Our goal in this section is to characterize the loss of efficiency caused by enforcing fairness in the offline setting, by designing approximation algorithms for EF1 scheduling and by proving matching lower bounds that establish the intrinsic cost of EF1.
\subsection{Maximum EF1 Scheduling}
Without fairness constraints, the offline interval scheduling problem on identical machines admits an efficient algorithm computing an optimal schedule~\citep{arkin_scheduling_1987, bouzina_interval_1996,CARLISLE1995225}. This provides a natural benchmark for evaluating the cost of enforcing fairness. We first consider the \emph{unweighted} setting, where all intervals have unit weight. Our focus is on computing a schedule that satisfies EF1 while retaining as much efficiency as possible. In particular, we show that EF1 can be enforced in the offline setting with only a $3/2$ price of fairness.
\begin{restatable}[Offline EF1 Upper Bound: Unweighted]{theorem}{OfflineUpperBoundUnitWeight}
In the \emph{unweighted offline} setting with $m$ identical machines, there exists a polynomial-time EF1 scheduling algorithm $\ALG$ whose price of fairness is at most $3/2$. That is, for every instance $\I$,
\[
\frac{|\OPT(\I)|}{|\ALG(\I)|} \;\le\; \frac{3}{2},
\]
where $\OPT(\I)$ denotes an optimal offline schedule without fairness constraints.
\label{thm:offline-ef1-upper}
\end{restatable}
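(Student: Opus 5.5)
We start from an optimal schedule $\OPT(\I)$ computed in polynomial time by the known algorithms for unweighted interval scheduling on identical machines. We then turn it into an EF1 schedule while losing at most one third of its intervals. By \Cref{obs:ef1-loads}, EF1 only requires that all loads lie in $\{\lfloor N/m\rfloor,\lceil N/m\rceil\}$ for the final number $N$ of scheduled intervals. So the task is purely to rebalance counts. We maintain a working schedule $S_1,\dots,S_m$, initialized to $\OPT$, and repeatedly move intervals from a machine of maximum load to one of minimum load until the max and min loads differ by at most one. Each step either transfers an interval at no cost or discards an interval, and we charge discards carefully.

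\textbf{Free moves.} Let $h$ be a heaviest and $\ell$ a lightest machine with $|S_h|\ge |S_\ell|+2$. If some interval of $S_h$ does not overlap any interval of $S_\ell$, move it; the total is unchanged. The standard tool when no such interval exists is a \emph{swap along a common point}. Sort both machines' intervals by time and consider a time $t$ such that the parts of $S_h$ and $S_\ell$ before $t$ and after $t$ can be exchanged: the prefix of $S_h$ ending before $t$ with the suffix of $S_\ell$ starting at or after $t$, and vice versa. Because $|S_h|\ge|S_\ell|+2$, a sweep argument over the interleaved sequence of endpoints shows that the prefix-count difference changes by one at a time. Hence either there is a crossing point where the exchange balances the counts exactly, or the only obstruction is a pair of overlapping intervals straddling every candidate cut. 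In the latter case we discard one interval of $S_h$ straddling the cut and then perform the exchange. This reduces $|S_h|-|S_\ell|$ by at least two while losing at most one interval.

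\textbf{Charging.} We must show the number of discards is at most $|\OPT|/3$. The plan is to argue that each discard is accompanied by two intervals that are permanently settled and never again involved in a discard. These are the two intervals whose mutual overlap forced the discard, one from $S_h$ and one from $S_\ell$, which end up on different, now-balanced positions. Concretely, we would charge a discard to a triple consisting of the discarded interval together with the two intervals of the blocking overlap, and prove the triples are disjoint. This gives $|\ALG|\ge \tfrac23|\OPT|$. An alternative route, which I would try if the local charging is messy, is a global count. Let $L=\max_i|S_i|$ in $\OPT$. If we cannot rebalance freely, then intervals are ``stacked'' in a way that $\OPT$'s excess above the average is bounded by the matchings between machines. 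Then we can show directly that truncating each heavy machine to $\min(|S_i|,\lfloor 2\bar{s}\rfloor)$-type targets loses at most a third.

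\textbf{Main obstacle.} The hard step is the charging/disjointness argument: proving that a single interval never has to absorb two discards across different heavy–light pairs, and that the processing order (always a global max and a global min) prevents cascading losses. I would first try to prove the invariant that after processing, every retained interval that participated in a blocking overlap lies on a machine whose load is never again decreased below the final target. The lower bound $\frac{3m-2}{2m-1}$ suggests the extremal case is a single heavy machine facing several light ones, and the invariant should be checked against that configuration first. Polynomial running time follows since each step strictly decreases $\sum_i\bigl|\,|S_i|-\bar{s}\,\bigr|$ and involves only a linear sweep.
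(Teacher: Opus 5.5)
You have the right local tool, but the step that actually yields $3/2$ is missing. Cutting two machines at a common time, exchanging prefixes, and losing at most the one interval that straddles the cut is essentially the paper's \textsc{Swap}. What you never prove is that the number of discards is at most $|\OPT|/3$, and the triple-disjointness invariant you propose is not implied by your procedure.

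\begin{itemize}
\item Because you pair a global maximum with a global minimum adaptively, with no final target fixed in advance, a machine can take part in several lossy exchanges.
\item A lossy exchange at difference exactly $2$ lowers the gap by only one, since the pair total becomes odd: $(3,1)\to(2,1)$. This contradicts your ``by at least two'' claim.
\item At the level of loads your rule permits $(6,1,1)\to(3,3,1)\to(2,3,1)\to(2,2,1)$. That is $3$ discards out of $8$, which exceeds the $1/3$ budget. So the count dynamics you track cannot give the bound; it would have to come from interval structure your argument does not supply.
\item The ``alternative route'' of truncating heavy machines does not by itself produce loads within one of each other.
\item A smaller point: the discarded straddler must come from the machine whose end time is \emph{not} used as the cut. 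In the instance of Theorem~\ref{thm:offline-ef1-lower}, every useful cut is straddled by the single long interval of the \emph{light} machine, so discarding from $S_h$ cannot work.
\end{itemize}

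The missing idea is global accounting rather than per-swap charging, together with a case split on $|\OPT|$.

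\textbf{Large instances.} The paper fixes the final loads up front. With $\mu=(|\OPT|-(m-1))/m$, it performs exactly $m-1$ exchanges. Each is between the current maximum and minimum among not-yet-settled machines, with prescribed targets summing to the pair's load minus one. Each exchange permanently settles one machine at $\lfloor\mu\rfloor$ or $\lceil\mu\rceil$. The loss is then exactly $m-1$, which is at most $|\OPT|/3$ once $|\OPT|\ge 3m$.

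\textbf{Small instances.} For $|\OPT|<3m$ a loss of $m-1$ is too large (take $|\OPT|=2m$). So after filling empty machines, the paper caps loads instead.
\begin{itemize}
\item If $|\OPT|\le\frac32 m$, it caps at one per machine, keeping $\min(|\OPT|,m)$ intervals.
\item Otherwise it caps at two. While some machine has load $L\ge 4$ and some machine has load $1$, it exchanges them with targets $(L-2,2)$, losing one interval and creating a machine of load $2$ that is never touched again. It then truncates every machine to at most two intervals.
\item If no singleton survives, then $|\ALG|=2m>\frac23|\OPT|$.
\item Otherwise no load exceeds $3$ before truncation. Each exchange loss is charged to the singleton it turned into a load-$2$ machine, and each truncation loss to the load-$3$ machine it came from. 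These are distinct machines ending with exactly two intervals, so the loss is at most $|\ALG|/2$.
\end{itemize}

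Without the fixed-target device and this case split, your proposal does not establish the theorem.
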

We next describe the algorithm and prove Theorem~\ref{thm:offline-ef1-upper}.
\paragraph{\emph{Rebalance-and-Swap} Algorithm.}
Our algorithm starts from an optimal offline schedule without fairness constraints and incrementally transforms it into an EF1 schedule. This makes the efficiency loss due to fairness easier to analyze, as we can directly compare to the initial optimal schedule.
This is accomplished via a subroutine \Call{Rebalance}{}, that for a given set of machines $\A$, produces a new schedule such that the load on each machine in $\A$ is either $\lfloor\mu\rfloor$ or $\lceil\mu\rceil$, where $\mu$ is the average load across machines in $\A$ after removing $(m-1)$ intervals. It does this by repeatedly considering the two machines with the largest and smallest load, and invoking a procedure \Call{Swap}{} that exchanges intervals between these two machines such that the load on one of the machines becomes either $\lfloor\mu\rfloor$ or $\lceil\mu\rceil$. The machine with the balanced load is then removed from consideration, and the process continues until all machines in $\A$ have load either $\lfloor\mu\rfloor$ or $\lceil\mu\rceil$.
The key component in our algorithm is the \Call{Swap}{} procedure that takes as input two machines $i$ and $\bar{i}$, along with target loads $t_i$ and $t_{\bar{i}}$. It works by taking a time $t$ and swapping intervals that start after time $t$ between the two machines. We choose the time $t$ such that we only need to discard one interval from one of the machines to achieve the target loads. The full description of these procedures are given below.
\begin{algorithm}
\caption{\emph{Rebalance-and-Swap} Algorithm}
\begin{algorithmic}[1]
\State \textbf{Input:} Interval set $\I$ and $m$ identical machines.
\State \textbf{Result:} An EF1 schedule of intervals on machines.
\Procedure{FairOffline}{$\I, m$}
\State Compute an optimal offline schedule $\OPT$ without fairness constraints.
\State Initialize $S_i$ to be the set of intervals assigned to machine $i$ in $\OPT$.
\While{$\exists$ machines $i, \bar{i}$ with $|S_i|\geq 2$ and $|S_{\bar{i}}|=0$}
    \State Reassign an interval from $i$ to $\bar{i}$.
\EndWhile
\State Reorder the machines so that $|S_1|\geq |S_2|\geq \ldots \geq |S_m|$.
\If{$|\OPT|\leq \frac32 m$}
    \State For each machine $i$ with $|S_i|\geq 2$, drop intervals until $|S_i|=1$.
    \State\Return current schedule.
\ElsIf{$|\OPT|\leq 3m -1$}
    \While{there exist machines $i, \bar{i}$ with $|S_i|\geq 4$ and $|S_{\bar{i}}|=1$}
    \State Choose such machines $i$ and $\bar{i}$.
    \State \Call{Swap}{$i, \bar{i}, |S_i|-2, 2$}
    \EndWhile
    \State For $i=1,\ldots, m$ drop intervals from $S_i$ until $|S_i|\leq 2$.
\Else
\State  \Call{Rebalance}{$\{1,\ldots, m\}$}
\EndIf
\EndProcedure
\end{algorithmic}
\end{algorithm}
\begin{algorithm}
\caption{Procedure Rebalance}
\begin{algorithmic}[1]
\State \textbf{Input:} Machine set $\A$.
\State \textbf{Result:} New schedule $S'$ satisfies $|S'_i|\in\{\lfloor\mu\rfloor ,\lceil\mu\rceil\}$ and $\sum_{i\in \A}|S'_i|=|\A|\cdot \mu$, where $\mu=\frac{\sum_{i\in \A} |S_i| - (|\A|-1)}{|\A|}$.
\Procedure{Rebalance}{$\A$}
    \State Set $\mu:= \frac{\sum_{i\in \A} |S_i| - (|\A|-1)}{|\A|}$.
    \State Choose $k$ such that $k\lfloor\mu\rfloor +(|\A|-k) \lceil\mu\rceil =| \A|\mu $.
    \State $F:= \emptyset$.
    \For {$r=1$ to $|\A|-1$}
    \State Let $i:= \arg\max_{i\in \A\setminus F} |S_i|$ and $\bar{i}:= \arg\min_{i\in \A\setminus F} |S_i|$.
    \State $t:=\lfloor \mu\rfloor$ if $r\leq k$ and $t:=\lceil\mu\rceil$ otherwise.
    \State $t':=|S_i|+|S_{\bar{i}}|-1-t$.
    \State \Call{Swap}{$i, \bar{i}, t, t'$}
    \State $F:= F\cup \{i\}$.
    \EndFor
\EndProcedure
\end{algorithmic}
\end{algorithm}
\begin{algorithm}
\caption{Procedure Swap}
\begin{algorithmic}[1]
\State \textbf{Input:} Machines $i, \bar{i}$ and target loads, $t_i, t_{\bar{i}}\in [\min(|S_{i}|, |S_{\bar{i}}|)-1, \max( |S_{i}|,|S_{\bar{i}}|)]$
\State \hspace{1.6cm} with $t_i+t_{\bar{i}}=|S_i|+|S_{\bar{i}}|-1$.
\State \textbf{Result:} $|S_{i}|=t_{i}$ and $|S_{\bar{i}}|=t_{\bar{i}}$.
\Procedure{Swap}{$i, \bar{i}, t_{i}, t_{\bar{i}}$}
\If{$t_i<t_{\bar{i}}$}
    \State \Call{Swap}{$\bar{i}, i, t_{\bar{i}}, t_i$}
    \State \Return
\EndIf
\If{$|S_i|<|S_{\bar{i}}|$}
    \State Swap all interval assignments of machines $i$ and $\bar{i}$.
\EndIf
\State Let $A_i(t):=\{j\in S_i:e_j \leq t\}$ and $B_i(t):=\{j\in S_i:s_j \geq t\}$ for both machines $i$ and $\bar{i}$. \label{line:swap-define-sets}
\State Find the smallest interval endpoint $t$ such that $|A_{i}(t)|-|A_{\bar{i}}(t)|=|S_{i}|-t_{i}=t_{\bar{i}}-|S_{\bar{i}}|+1$. \label{line:swap-find-t}
\State Set $S_{i}':=A_{\bar{i}}(t) \cup B_{i}(t)$ and $S_{\bar{i}}':=A_{i}(t) \cup B_{\bar{i}}(t)$.
\State If $|S_{\bar{i}}'|>t_{\bar{i}}$, remove one interval from $S_{\bar{i}}'$.
\State Assign $S_{i}:=S_{i}'$ and $S_{\bar{i}}:=S_{\bar{i}}'$.
\EndProcedure
\end{algorithmic}
\end{algorithm}
First we verify that the key subroutine of the algorithm, \Call{Swap}{}, is well-defined and achieves the desired effect.
\begin{lemma}
    For machines $i, \bar{i}$ and target loads, $t_i, t_{\bar{i}}\in [\min(|S_{i}|, |S_{\bar{i}}|)-1, \max( |S_{i}|,|S_{\bar{i}}|)]$ with $t_i+t_{\bar{i}}=|S_i|+|S_{\bar{i}}|-1$, \Call{Swap}{$i, \bar{i}, t_i, t_{\bar{i}}$} is well-defined and changes the schedule such that $|S_{i}|=t_{i}$ and $|S_{\bar{i}}|= t_{\bar{i}}$.\label{lem:swap-well-defined}
\end{lemma}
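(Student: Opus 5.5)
The plan is to reduce to a normalized situation, use a discrete intermediate-value argument to show that the time $t$ in line~\ref{line:swap-find-t} exists, and then count the intervals in the two new bundles. The recursive call and the optional exchange of whole bundles only relabel machines. Hence, after them, we may assume $t_i\ge t_{\bar i}$ and $|S_i|\ge |S_{\bar i}|$; the hypotheses on the targets are symmetric in $i,\bar i$ and survive this relabeling. Set $D:=|S_i|-|S_{\bar i}|\ge 0$ and $d:=|S_i|-t_i$. Since $t_i\le\max(|S_i|,|S_{\bar i}|)=|S_i|$, we get $d\ge 0$. Since $t_i\ge t_{\bar i}$ and $t_i+t_{\bar i}=|S_i|+|S_{\bar i}|-1$, we get $t_i\ge (|S_i|+|S_{\bar i}|-1)/2$, hence $d\le (D+1)/2$. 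This gives $d\le D$ whenever $D\ge1$, and $d=0$ when $D=0$. So in all cases $0\le d\le D$. The identity $d=t_{\bar i}-|S_{\bar i}|+1$ follows directly from the sum constraint.

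\emph{Existence of $t$.} Consider $f(t):=|A_i(t)|-|A_{\bar i}(t)|$ as $t$ runs over the interval endpoints, together with a point before everything (where $f=0$) and time $T$ (where $f=D$). Intervals on one machine are pairwise disjoint, so at any time at most one interval of each machine ends. Thus $f$ changes by $-1$, $0$ or $+1$ at each endpoint, and by the discrete intermediate value theorem $f$ attains every value in $[0,D]$, in particular $d$. For $d=0$ I will take $t$ to be the earliest start time, or treat it as a degenerate case, where both $A$-sets are empty and nothing straddles $t$. For $d>0$, at the smallest $t$ with $f(t)=d$ the value of $f$ has just increased. So the jump at $t$ must come from an interval of $S_i$ ending exactly at $t$ (with no interval of $S_{\bar i}$ ending there). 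This choice of the smallest $t$ is the step I expect to be the crux: it must be chosen so that machine $i$ loses nothing.

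\emph{Feasibility and counting.} Because intervals are half-open, every interval in $A_{\bar i}(t)$ ends by $t$ and every interval in $B_i(t)$ starts at or after $t$. So $S_i'$ is conflict-free, and symmetrically so is $S_{\bar i}'$. Let $c_i,c_{\bar i}\in\{0,1\}$ indicate whether machine $i$ (resp.\ $\bar i$) has an interval containing $t$ in its interior. At most one such interval exists per machine, and it is exactly the interval that lies in neither $A$ nor $B$. Hence $|B_i(t)|=|S_i|-|A_i(t)|-c_i$, which gives
\[
|S_i'|=|S_i|-f(t)-c_i=t_i-c_i,\qquad |S_{\bar i}'|=|S_{\bar i}|+f(t)-c_{\bar i}=t_{\bar i}+1-c_{\bar i}.
\]
By the choice of $t$, machine $i$ has an interval ending at $t$ (or nothing straddles $t$ in the degenerate case), so $c_i=0$ and $|S_i'|=t_i$. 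If $c_{\bar i}=1$ then $|S_{\bar i}'|=t_{\bar i}$ already. Otherwise $|S_{\bar i}'|=t_{\bar i}+1$, and the removal step brings it to $t_{\bar i}$. Both machines therefore end with exactly their target loads, and each machine remains conflict-free throughout.
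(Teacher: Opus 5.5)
Your proof is correct and follows essentially the same route as the paper's: you normalize so that $t_i\ge t_{\bar i}$ and $|S_i|\ge|S_{\bar i}|$, apply a discrete intermediate-value argument to $|A_i(t)|-|A_{\bar i}(t)|$ (whose jumps are at most $1$ because each machine's intervals are disjoint), use minimality of $t$ to get an interval of $S_i$ ending at $t$ so that machine $i$ loses nothing, and note that at most one interval of $S_{\bar i}$ straddles $t$. Your version is slightly more careful than the paper's in two places: you derive $0\le d\le D$ (equivalently $t_i\ge|S_{\bar i}|$) explicitly from the sum constraint and integrality, and you check that the new bundles are conflict-free, both of which the paper leaves implicit.
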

\begin{proof}
    By the first two if statements, we may assume without loss of generality that $|S_i|\geq t_i \geq t_{\bar{i}}\geq |S_{\bar{i}}|-1$ from line \ref{line:swap-define-sets} onwards.
    To see that there exists a feasible time $t$ in line~\ref{line:swap-find-t} and that the procedure is well-defined, observe that the function $|A_{i}(t)|-|A_{\bar{i}}(t)|$ takes values in $\mathbf{Z}$ and changes only at end times of intervals, so it suffices to consider the finitely many interval endpoints. Moreover, since intervals assigned to the same machine are disjoint, at most one interval of each machine ends at any given time; hence the function has jumps of size at most $1$. Furthermore, we have $|A_{i}(0)|-|A_{\bar{i}}(0)|=0$ and $|A_{i}(T)|-|A_{\bar{i}}(T)|=|S_i|-|S_{\bar{i}}|\geq 0$. Since $|S_i|-|S_{\bar{i}}| \geq |S_i|-t_{i} \geq 0$, there must be some time $t$ where $|A_{i}(t)|-|A_{\bar{i}}(t)|=|S_i|-t_{i}$.
    If $|S_i|-t_i=0$, then we may take $t=0$, and since all start times are nonnegative we have $A_i(t)\cup B_i(t)=S_i$.
    Otherwise, as $t$ is the smallest such time, an interval finishes on machine $i$ at time $t$. Hence, we again have $A_{i}(t)\cup B_{i}(t)=S_{i}$.
    For machine $\bar{i}$, we have $|S_{\bar{i}}\setminus (A_{\bar{i}}(t)\cup B_{\bar{i}}(t)) | \leq 1$, since at most one interval can be active on machine $\bar{i}$ at time $t$.
    So, we have $|S_{i}'|=|A_{\bar{i}}(t)|+|B_{i}(t)|=|A_{\bar{i}}(t)|+|S_{i}|-|A_{i}(t)|=t_{i}$, and $|S_{\bar{i}}'|\geq |A_{i}(t)|+|B_{\bar{i}}(t)|\geq |A_{i}(t)|+|S_{\bar{i}}|-|A_{\bar{i}}(t)|-1=t_{\bar{i}}$. Then if $|S_{\bar{i}}'|>t_{\bar{i}}$, we can remove an interval from $S_{\bar{i}}'$ so that $|S_{\bar{i}}'|=t_{\bar{i}}$. So, after putting $S_i:=S_{i}'$ and $S_{\bar{i}}:=S_{\bar{i}}'$, we have $|S_{i}|=t_{i}$ and $|S_{\bar{i}}|=t_{\bar{i}}$, as desired.
\end{proof}
Now we verify that the main procedure \Call{Rebalance}{} is well-defined, results in a schedule with the desired properties, and that all calls to \Call{Swap}{} within it satisfy the preconditions of \cref{lem:swap-well-defined}.
\begin{lemma}
The procedure \Call{Rebalance}{$\A$} is well-defined and results in a schedule $S'$ with $|S'_i|\in \{\lfloor\mu\rfloor ,\lceil\mu\rceil\}$ for all $i\in \A$ and $\sum_{i\in \A}|S'_i|=|\A| \cdot \mu$, where $\mu=\frac{\sum_{i\in \A} |S_i| - (|\A|-1)}{|\A|}$.
\label{lem:rebalance-well-defined}
\end{lemma}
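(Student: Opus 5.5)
We argue by induction over the iterations of the loop, with a counting invariant. Let $n=|\A|$ and $L=\sum_{i\in\A}|S_i|$, so that $n\mu = L-(n-1)$ is an integer. First, $k$ is well defined: if $\mu\in\mathbf{Z}$, any $k$ works (take $k=n$). Otherwise $\lceil\mu\rceil=\lfloor\mu\rfloor+1$, and $k=n\lceil\mu\rceil-n\mu$ is an integer in $[0,n]$. Let $\tau_1,\dots,\tau_n$ be the target sequence: $\tau_r=\lfloor\mu\rfloor$ for $r\le k$ and $\tau_r=\lceil\mu\rceil$ otherwise. Then $\sum_r\tau_r=n\mu$.

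\textbf{Invariant.} After $r$ rounds, we claim:
\begin{itemize}
\item each machine in $F$ has load equal to its assigned target;
\item $|F|=r$;
\item the total load on $\A\setminus F$ equals $\sum_{s>r}\tau_s+(n-1-r)$.
\end{itemize}
This follows from \cref{lem:swap-well-defined}, provided each call satisfies its preconditions. Round $r+1$ sets $|S_i|=t$ and adds $i$ to $F$. It sets $|S_{\bar i}|=t'$, and $\bar i$ stays outside $F$. It destroys exactly one interval, since $t+t'=|S_i|+|S_{\bar i}|-1$. So the total on the unfinalized machines drops by $t+1=\tau_{r+1}+1$, which preserves the invariant. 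Also $i\neq\bar i$ as long as $|\A\setminus F|\ge 2$, which holds for $r\le n-2$. If all loads in $\A\setminus F$ are equal, we pick two distinct machines arbitrarily.

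\textbf{Preconditions of \Call{Swap}{}.} This is the key step. We need $t,t'\in[\min-1,\max]$, where $\max=|S_i|$ and $\min=|S_{\bar i}|$ are taken over $\A\setminus F$. Since $t'=\max+\min-1-t$, we have $t'\le\max$ iff $t\ge\min-1$, and $t'\ge\min-1$ iff $t\le\max$. So it suffices to show $\min-1\le t\le\max$.

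Let $q=n-r\ge2$ be the number of unfinalized machines before round $r+1$. By the invariant, their average load is
\[
a=\bar\tau+\frac{q-1}{q},
\]
where $\bar\tau$ is the average of the remaining targets $\tau_{r+1},\dots,\tau_n$. Because the targets are sorted with floors first, all remaining targets are at least $t=\tau_{r+1}$. Hence $\bar\tau\ge t$, so $\max\ge a\ge t$. In the other direction, $\bar\tau\le\lceil\mu\rceil$ gives $\min\le a<\lceil\mu\rceil+1$, and since $\min$ is an integer, $\min\le\lceil\mu\rceil$. If $t=\lceil\mu\rceil$, this gives $\min-1\le t$ directly. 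If $t=\lfloor\mu\rfloor<\lceil\mu\rceil$, then $\min-1\le\lceil\mu\rceil-1=\lfloor\mu\rfloor=t$. This bound, which relies on the floors-first ordering, is the only delicate point.

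\textbf{Conclusion.} After $n-1$ rounds, the invariant with $r=n-1$ says that the single machine left outside $F$ has load exactly $\tau_n\in\{\lfloor\mu\rfloor,\lceil\mu\rceil\}$. Every machine in $F$ has load in $\{\lfloor\mu\rfloor,\lceil\mu\rceil\}$, and the total is $\sum_r\tau_r=|\A|\mu$. This proves the lemma.
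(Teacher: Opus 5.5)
Your proof is correct and follows essentially the same route as the paper: the same invariant (the unfinalized machines carry the sum of the remaining targets plus $|\A\setminus F|-1$) and the same averaging bounds $\min\le\lceil\mu\rceil$ and $\max\ge t$ to certify the preconditions of \textsc{Swap}. You also spell out a few details that the paper only asserts: that $k$ exists, that the invariant is maintained from round to round, and that $i\neq\bar i$ can be chosen when all remaining loads tie.
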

\begin{proof}
  We first verify that the procedure \Call{Rebalance}{} is well-defined by checking that each call to \Call{Swap}{} satisfies its preconditions.
  We first check that for each call \Call{Swap}{$i, \bar{i}, t_i, t_{\bar{i}}$}, both target loads lie in the interval
  \[
      [\min(|S_i|,|S_{\bar{i}}|)-1,\max(|S_i|,|S_{\bar{i}}|)]
  \]
  and satisfy $t_i+t_{\bar{i}}=|S_i|+|S_{\bar{i}}|-1$.
  Since $i$ and $\bar{i}$ are chosen as the maximum-load and minimum-load machines in $\A\setminus F$, respectively, we have $|S_i|\ge |S_{\bar{i}}|$, so this interval is simply $[|S_{\bar{i}}|-1,|S_i|]$.
  From the definition of $t$ and $t'$ the latter follows immediately. Our algorithm maintains that at the start of the $r$th iteration we have $|\A\setminus F|=|\A|-r+1$ and $\sum_{i\in \A\setminus F} |S_i| = \lfloor \mu \rfloor \max(0, k-r+1)+\lceil \mu \rceil \min(|\A|-k, |\A| -r+1) + (|\A|-r)$. Therefore, in the $r$th iteration we have:
  \begin{align*}
    \min_{i\in \A\setminus F} |S_i| &\leq \left\lfloor \frac{\sum_{i\in \A\setminus F} |S_i|}{|\A\setminus F|} \right\rfloor \leq  \left\lfloor \frac{\lceil \mu\rceil (|\A|-r+1) + (|\A|-r)}{|\A|-r+1}   \right\rfloor =
     \lceil\mu \rceil .
  \end{align*}
  and
  \begin{align*}
    \max_{i\in \A\setminus F} |S_i| &\geq \left\lceil \frac{\sum_{i\in \A\setminus F} |S_i|}{|\A\setminus F|}\right\rceil \geq  \left\lceil \frac{\mu(|\A|-r+1)}{|\A|-r+1}\right\rceil=
     \lceil\mu \rceil.
  \end{align*}
  Hence, $|S_{\bar{i}}|-1 \leq \lfloor \mu \rfloor\leq t\leq \lceil \mu \rceil \leq |S_{i}|$. Furthermore, since $t'=|S_i|+|S_{\bar{i}}|-1-t$, the upper bound $t\le |S_i|$ implies
  \[
      t'\ge |S_{\bar{i}}|-1,
  \]
  and the lower bound $t\ge |S_{\bar{i}}|-1$ implies
  \[
      t'\le |S_i|.
  \]
  Thus both $t$ and $t'$ lie in $[|S_{\bar{i}}|-1, |S_i|]$. The target pair passed to \Call{Swap}{} is $(t,t')$, and the two targets sum to $|S_i|+|S_{\bar{i}}|-1$. We conclude that the preconditions of \Call{Swap}{} are satisfied, and the procedure is well-defined.
  It is also clear that for all $i\in F$, we have $|S_i|\in \{\lfloor\mu\rfloor ,\lceil\mu\rceil\}$. After the loop, there will be just one machine $i\in \A\setminus F$. This machine will have load $\lceil \mu \rceil$.
  So, we end up with $k$ machines with load $\lfloor \mu \rfloor$ and $|\A|-k$ machines with load $\lceil \mu \rceil$, and hence $\sum_{i\in \A}|S'_i|=|\A|\cdot \mu$, as desired.
\end{proof}
Now we can prove Theorem~\ref{thm:offline-ef1-upper}, showing that the above algorithm achieves EF1 with price of fairness at most $3/2$.
\begin{proof}[Proof of Theorem~\ref{thm:offline-ef1-upper}]
It is straightforward to verify that the algorithm returns an EF1 schedule and runs in polynomial time.
We analyze the schedule returned by the \emph{Rebalance-and-Swap} algorithm according to the value of $|\OPT(\I)|$.
Recall from \Cref{obs:ef1-loads} that in the unweighted setting, a schedule is EF1 if and only if the loads of any two machines differ by at most one.
We show that in all cases the algorithm outputs an EF1 schedule and satisfies
\[
|\ALG(\I)| \;\ge\; \tfrac{2}{3}|\OPT(\I)|.
\]
\paragraph{Case 1: $|\OPT(\I)| \le \tfrac{3}{2}m$.}
In this regime, the algorithm assigns at most one interval to each machine.
If $|\OPT(\I)|\le m$, then each interval can be placed on a distinct machine and the algorithm returns the same schedule, so $|\ALG(\I)|=|\OPT(\I)|$.
If $m<|\OPT(\I)|\le \tfrac{3}{2}m$, then the algorithm assigns exactly one interval to each machine, yielding $|\ALG(\I)|=m$.
Since $m\ge \tfrac{2}{3}|\OPT(\I)|$ in this range, the approximation guarantee holds.
In both subcases, all machines have equal load in $\ALG(\I)$, so the schedule is EF1.
\paragraph{Case 2: $\tfrac{3}{2}m < |\OPT(\I)| \le 3m-1$.}
After the initial reassignment step, every machine is nonempty; otherwise, since $|\OPT(\I)|>m$, there would still be an empty machine and another machine with at least two intervals.
The algorithm therefore maintains that every machine has at least one interval.
Each call to \Call{Swap}{} in this case is applied to a machine with load $L\ge 4$ and a machine with load~$1$, with target loads $L-2$ and~$2$.
Thus the call loses exactly one interval, changes the singleton machine into a machine with two intervals, and this machine is not modified again by the while loop or by the final dropping step.
When the while loop terminates, there are two cases.
First, suppose there is no machine with exactly one interval. Then every machine has load at least~$2$, and after the final dropping step every machine has exactly two intervals. Hence $|\ALG(\I)|=2m$, and since $|\OPT(\I)|\le 3m-1$, we have
\[
|\ALG(\I)|=2m \ge \tfrac{2}{3}|\OPT(\I)|.
\]
Second, suppose at least one singleton machine remains. Since the while loop has terminated, no machine has load at least~$4$. Thus before the final dropping step every machine has load in $\{1,2,3\}$.
The only losses in the final dropping step come from machines of load~$3$, each of which loses one interval and ends with load~$2$.
Charge each swap loss to the singleton machine that the swap changes to load~$2$, and charge each final dropped interval to the machine from which it is dropped.
These charges are all to distinct machines that have load~$2$ in the final schedule. Therefore
\[
|\OPT(\I)|-|\ALG(\I)|
\le
\#\{i: |S_i|=2 \text{ in the final schedule}\}
\le
\tfrac12 |\ALG(\I)|.
\]
Consequently, $|\ALG(\I)|\ge \tfrac{2}{3}|\OPT(\I)|$.
In both cases, every machine receives either one or two intervals in the final schedule, so the schedule is EF1.
\paragraph{Case 3: $|\OPT(\I)| \ge 3m$.}
Here the algorithm produces a schedule in which machines receive either
\[
L_- \;=\; \Bigl\lfloor \tfrac{|\OPT(\I)|-(m-1)}{m} \Bigr\rfloor
\quad\text{or}\quad
L_+ \;=\; \Bigl\lceil \tfrac{|\OPT(\I)|-(m-1)}{m} \Bigr\rceil
\]
intervals, with $k$ machines receiving $L_-$ intervals and the remaining $m-k$ machines receiving $L_+$ intervals.
Again, loads differ by at most one, so the schedule is EF1.
The total number of accepted intervals is
\[
|\ALG(\I)| \;=\; kL_- + (m-k)L_+ \;=\; |\OPT(\I)|-(m-1).
\]
Since $|\OPT(\I)|\ge 3m$, we have $m-1\le \tfrac{1}{3}|\OPT(\I)|$, and therefore
\[
|\ALG(\I)| \;\ge\; |\OPT(\I)| - \tfrac{1}{3}|\OPT(\I)| \;=\; \tfrac{2}{3}|\OPT(\I)|.
\]
In all cases, the algorithm returns an EF1 schedule with value at least $\tfrac{2}{3}|\OPT(\I)|$.
\end{proof}
\paragraph{Extension to Multiple Weight Levels.}
We now show that the offline EF1 guarantee extends beyond uniform valuation to settings with a bounded number of distinct weights, at the cost of a proportional loss in efficiency.
\begin{corollary}[Offline EF1 with Multiple Weight Levels]
\label{cor:offline-ef1-multiple-weights}
In the \emph{offline} setting with $m$ identical machines, suppose that interval weights take values from a set of $k$ distinct positive weights.
Then there exists a polynomial-time EF1 scheduling algorithm whose price of fairness is at most $\tfrac{3}{2}\,k$.
That is, for every instance $\I$,
\[
\frac{|\OPT(\I)|}{|\ALG(\I)|} \;\le\; \tfrac{3}{2}\,k.
\]
\end{corollary}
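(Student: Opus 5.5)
Our plan is to reduce to the unweighted case by restricting to a single weight class. Let the distinct weights be $w_1,\dots,w_k$, and let $\I_\ell\subseteq\I$ be the intervals of weight $w_\ell$. Fix an optimal unconstrained schedule $\OPT(\I)$ and write $O_\ell=\OPT(\I)\cap\I_\ell$. Restricted to the machines, $O_\ell$ is still a feasible schedule of $\I_\ell$, so $|\OPT(\I_\ell)|\ge |O_\ell|$ when counted unweighted. Since $w(\OPT(\I))=\sum_\ell w_\ell |O_\ell|$, some class $\ell^*$ satisfies $w_{\ell^*}|O_{\ell^*}|\ge \tfrac1k\, w(\OPT(\I))$. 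The algorithm does not need to know $\ell^*$: for each $\ell$ it computes the unweighted optimum of $\I_\ell$ in polynomial time, for instance via \citep{arkin_scheduling_1987, bouzina_interval_1996}, multiplies by $w_\ell$, and picks the class $\ell^*$ that maximizes $w_\ell\,|\OPT(\I_\ell)|$. This maximum is at least $w_{\ell^*}|O_{\ell^*}|\ge w(\OPT(\I))/k$.

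Next, we run the \emph{Rebalance-and-Swap} algorithm of Theorem~\ref{thm:offline-ef1-upper} on $\I_{\ell^*}$ alone, treating every interval as having unit weight. This returns a schedule in which the machine loads differ by at most one, and which contains at least $\tfrac23|\OPT(\I_{\ell^*})|$ intervals. All scheduled intervals have the same weight $w_{\ell^*}$, so machine utilities are $w_{\ell^*}$ times their loads. Therefore the argument behind \Cref{obs:ef1-loads} still applies: if $u(S_k)>u(S_i)$, then $|S_k|=|S_i|+1$, and removing any single interval from $S_k$ removes the envy. Hence the schedule is EF1 with respect to the true weights. Its total weight is at least
\[
\tfrac23\, w_{\ell^*}|\OPT(\I_{\ell^*})| \ge \tfrac{2}{3k}\, w(\OPT(\I)),
\]
which gives the ratio $\tfrac32 k$.

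The only real care point is EF1 under heterogeneous weights. If intervals from several classes were mixed, count balance would no longer imply EF1, and that is exactly why we keep a single class. Running time is $k$ unweighted optimum computations plus one run of the algorithm from Theorem~\ref{thm:offline-ef1-upper}, which is polynomial. Finally, if some weight is zero it contributes nothing, so we may drop that class, and only positive weights matter as stated.
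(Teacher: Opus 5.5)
Your proof is correct and matches the paper's argument: split $\I$ into weight classes, run Rebalance-and-Swap on one class treated as unweighted, and lose at most a factor $k$ because the restriction of $\OPT(\I)$ to each class is a feasible schedule for that class. The differences are cosmetic: the paper runs the algorithm on every class, outputs the heaviest resulting schedule, and bounds $w(\OPT(\I))\le\sum_t w(\OPT(\I_t))$, whereas you pick the class maximizing $w_\ell|\OPT(\I_\ell)|$, run the algorithm once, and spell out the EF1 check for a single weight class that the paper only asserts (you also use $\ell^*$ both for the pigeonhole class and for the argmax class, and these should be named separately).
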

\begin{proof}
    Let the set of distinct weights be $\{\omega_1,\dots,\omega_k\}$, and for each $t\in[k]$ let
    \[
    \I_t \;=\; \{\, I\in\I : w(I)=\omega_t \,\}
    \]
    denote the sub-instance consisting of intervals of weight $\omega_t$.
    For each $t$, we apply the algorithm of Theorem~\ref{thm:offline-ef1-upper} to $\I_t$, obtaining an EF1 schedule $\ALG_t$.
    Since all intervals in $\I_t$ share the same weight $\omega_t$, applying Theorem~\ref{thm:offline-ef1-upper} to $\I_t$ viewed as an unweighted instance and multiplying by $\omega_t$ yields
    \[
    w(\OPT(\I_t)) \;\le\; \tfrac{3}{2}\,w(\ALG_t).
    \]
    Let $t^\star\in[k]$ be an index maximizing $w(\ALG_t)$, and output $\ALG_{t^\star}$.
    The resulting schedule is EF1.
    To relate this to the global offline optimum, observe that for each $t$, the set of intervals of weight $\omega_t$ selected by $\OPT(\I)$ forms a feasible schedule for the sub-instance $\I_t$. Hence
    \[
    w(\OPT(\I))
    \;\le\; \sum_{t=1}^k w(\OPT(\I_t)).
    \]
    Therefore,
    \[
    w(\OPT(\I))
    \;\le\;
    \sum_{t=1}^k w(\OPT(\I_t))
    \;\le\;
    \sum_{t=1}^k \tfrac{3}{2}\,w(\ALG_t)
    \;\le\;
    \tfrac{3}{2}\,k\cdot w(\ALG_{t^\star}).
    \]
    This proves that the price of fairness is at most $\tfrac{3}{2}k$.
\end{proof}
\subsection{Lower Bounds}
We now show that the $3/2$ price of fairness established in the previous subsection is unavoidable.
Specifically, we prove that no EF1 scheduling algorithm can achieve a better approximation guarantee with respect to the offline optimum without fairness.
Our lower bounds hold in two fundamental settings: the unweighted case, and the unit-length weighted case, where intervals may have arbitrary weights but identical lengths.
Together, these results show that the efficiency loss arises from the fairness constraint itself, rather than from specific features of the valuation model or interval lengths.
\begin{restatable}[Offline EF1 Lower Bound: Unweighted]{theorem}{OfflineLowerBoundUnitWeight}
In the \emph{unweighted offline} setting, no EF1 scheduling algorithm can achieve a price of fairness strictly smaller than $\tfrac{3}{2}$. Specifically, for every $m\ge 2$ there is an instance $\I$ on $m$ identical machines such that every EF1 schedule $S$ of $\I$ satisfies $|\OPT(\I)|\ge \frac{3m-2}{2m-1}\,|S|$.
\label{thm:offline-ef1-lower}
\end{restatable}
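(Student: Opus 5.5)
The plan is to build, for each $m\ge 2$, a single instance in which any EF1 schedule is forced into the trade-off "long intervals cap all loads at two" versus "short intervals alone are too few". By \Cref{obs:ef1-loads}, in the unweighted setting it suffices to bound the total number of intervals in any schedule whose machine loads differ by at most one.

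\emph{Construction.} Fix a horizon $T=2m-1$. Let $\I$ consist of $m-1$ identical \emph{long} intervals $[0,T)$ and $2m-1$ pairwise disjoint \emph{short} intervals $[j-1,j)$ for $j=1,\dots,2m-1$. Every short interval overlaps every long interval, and any two long intervals overlap each other.

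\emph{Value of $\OPT$.} We put the $m-1$ long intervals on machines $2,\dots,m$, one per machine, and all $2m-1$ short intervals on machine~$1$, where they are disjoint. This is feasible and gives $|\OPT(\I)|\ge 3m-2$. This is also the total number of intervals, so equality holds.

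\emph{Upper bound for EF1 schedules.} Let $S$ be any EF1 schedule and let $k$ be the number of long intervals it uses, with $0\le k\le m-1$.
\begin{itemize}
\item Since long intervals pairwise overlap, they lie on $k$ distinct machines. A machine holding a long interval cannot also hold a short one, so each of these machines has load exactly~$1$.
\item If $k\ge 1$, EF1 forces every other machine to have load at most $2$. Hence $|S|\le k+2(m-k)=2m-k\le 2m-1$.
\item If $k=0$, then $S$ contains only short intervals, so trivially $|S|\le 2m-1$.
\end{itemize}
In either case $|S|\le 2m-1$, which gives $|\OPT(\I)|\ge \frac{3m-2}{2m-1}|S|$. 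Letting $m\to\infty$ shows that no price of fairness below $3/2$ is achievable.

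\emph{Main obstacle.} The only delicate point is choosing the gadget sizes so that the two regimes, using some long intervals versus none, both cap at exactly $2m-1$. Taking exactly $2m-1$ short intervals balances the $k=0$ case against the $k\ge 1$ bound of $2m-k$. Once these counts are fixed, the rest is a routine case check.
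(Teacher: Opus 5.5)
Your proposal is correct and follows the paper's proof essentially verbatim. It uses the same instance ($m-1$ long intervals $[0,2m-1)$ and $2m-1$ unit intervals), the same optimum $3m-2$, and the same two-case EF1 bound via \Cref{obs:ef1-loads}; your count $2m-k$ is just a slightly finer form of the paper's $1+2(m-1)$.
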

\begin{proof}
Fix $m\ge 2$. Consider the instance $\I$ consisting of the following intervals, all of unit weight.
\begin{itemize}
    \item $2m-1$ \emph{short} intervals of unit length: for each $t\in\{0,1,\ldots,2m-2\}$ there is one interval
    \[
    I_t \;=\; [t,t+1).
    \]
    \item $m-1$ \emph{long} intervals of length $2m-1$: for each $q\in\{1,\ldots,m-1\}$ there is one interval
    \[
    L_q \;=\; [0,2m-1).
    \]
\end{itemize}
\paragraph{Benchmark (offline optimum without fairness).}
An offline optimum without fairness schedules all $3m-2$ intervals: place the $m-1$ long intervals on $m-1$ distinct machines, and place all $2m-1$ short intervals sequentially on the remaining machine. Hence
\[
|\OPT(\I)| \;=\; (m-1) + (2m-1) \;=\; 3m-2.
\]
\paragraph{Any EF1 schedule accepts at most $2m-1$ intervals.}
Recall from \Cref{obs:ef1-loads} that in the unweighted setting, a schedule is EF1 if and only if the loads of any two machines differ by at most one.
Suppose first that an EF1 schedule accepts at least one long interval. Since each long interval overlaps every other interval of the instance, the machine carrying it has load exactly~$1$. By \Cref{obs:ef1-loads}, every machine then has load at most~$2$, so the total number of accepted intervals is at most $1+2(m-1)=2m-1$.
If, on the other hand, the EF1 schedule accepts no long interval, then it accepts only short intervals, and since there are only $2m-1$ short intervals, the total is again at most $2m-1$.
Therefore every EF1 schedule $S$ of $\I$ satisfies $|S| \le 2m-1$.
\paragraph{Price of fairness.}
Combining the two bounds, every EF1 schedule $S$ of $\I$ satisfies
\[
\frac{|\OPT(\I)|}{|S|}
\;\ge\;
\frac{3m-2}{2m-1}.
\]
Since this quantity tends to $\tfrac{3}{2}$ as $m$ grows, no EF1 scheduling algorithm can achieve a price of fairness strictly smaller than $\tfrac{3}{2}$.
\end{proof}
\begin{figure}[!t]
  \centering
  \begin{subfigure}[t]{0.48\textwidth}
    \centering
    \begin{tikzpicture}[x=0.6cm,y=0.8cm]
      \def\Xmax{7}
      \def\eps{0.10}
      \machineRow{4}{Machine 1}{\Xmax}
      \machineRow{3}{Machine 2}{\Xmax}
      \machineRow{2}{Machine 3}{\Xmax}
      \machineRow{1}{Machine 4}{\Xmax}
      \drawInterval[MyRed]{solid}{0+\eps}{7-\eps}{3}
      \drawInterval[MyRed]{solid}{0+\eps}{7-\eps}{2}
      \drawInterval[MyRed]{solid}{0+\eps}{7-\eps}{1}
      \foreach \t in {0,...,6} {
        \drawInterval[MyRed]{solid}{\t+\eps}{\t+1-\eps}{4}
      }
    \end{tikzpicture}
    \caption{Offline optimum without fairness, accepting all $3m-2$ intervals.}
    \label{fig:lb-sub-no-fair}
  \end{subfigure}\hfill
  \begin{subfigure}[t]{0.48\textwidth}
    \centering
    \begin{tikzpicture}[x=0.6cm,y=0.8cm]
      \def\Xmax{7}
      \def\eps{0.10}
      \machineRow{4}{Machine 1}{\Xmax}
      \machineRow{3}{Machine 2}{\Xmax}
      \machineRow{2}{Machine 3}{\Xmax}
      \machineRow{1}{Machine 4}{\Xmax}
      \drawInterval[MyBlue]{solid}{0+\eps}{1-\eps}{4}
      \drawInterval[MyBlue]{solid}{4+\eps}{5-\eps}{4}
      \drawInterval[MyBlue]{solid}{1+\eps}{2-\eps}{3}
      \drawInterval[MyBlue]{solid}{5+\eps}{6-\eps}{3}
      \drawInterval[MyBlue]{solid}{2+\eps}{3-\eps}{2}
      \drawInterval[MyBlue]{solid}{6+\eps}{7-\eps}{2}
      \drawInterval[MyBlue]{solid}{3+\eps}{4-\eps}{1}
    \end{tikzpicture}
    \caption{An EF1 schedule, which can accept at most $2m-1$ intervals.}
    \label{fig:lb-sub-ef1}
  \end{subfigure}
  \caption{Lower-bound instance for Theorem~\ref{thm:offline-ef1-lower} in the unweighted offline setting, illustrated for $m=4$.
The offline optimum without fairness schedules all $3m-2$ intervals, whereas any EF1 schedule can accept at most $2m-1$ intervals; one such EF1 schedule is shown.}
  \label{fig:lb-two-subfigs}
\end{figure}
We next show that this lower bound persists even in a more structured setting, where all intervals have identical length but arbitrary weights.
\begin{restatable}[Offline EF1 Lower Bound: Unit-Length]{theorem}{OfflineLowerBoundUnitLength}
In the \emph{unit-length offline} setting, where intervals have identical lengths and arbitrary nonnegative weights, no EF1 scheduling algorithm can achieve a price of fairness strictly smaller than $\tfrac{3}{2}$.
\label{thm:unitlength-offline-ef1-lower}
\end{restatable}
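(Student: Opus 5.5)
The plan is to mirror the proof of Theorem~\ref{thm:offline-ef1-lower}. I will build, for each $m\ge 2$ and a small parameter $\varepsilon>0$, an instance $\I_{m,\varepsilon}$ of unit-length intervals with suitable weights. I will then show that $w(\OPT(\I_{m,\varepsilon}))/w(S)$ is at least roughly $\frac{3m-2}{2m-1}-O(\varepsilon)$ for every EF1 schedule $S$, and let $m\to\infty$ and $\varepsilon\to 0$. Two things from the unweighted proof cannot be transferred directly. A long interval that conflicts with everything cannot exist when all lengths are equal. Also, Observation~\ref{obs:ef1-loads} no longer applies, so EF1 has to be used in its weighted form: for every pair $i,k$ we need $w(S_i)\ge w(S_k)-\max_{I\in S_k}w(I)$.

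The idea is to replace conflicts by weights. The role of a long interval is to leave its machine with essentially nothing else of value. For that machine we can use one \emph{heavy} interval $H=[0,1)$ of weight $1$, placed alongside a collection of overlapping \emph{light} intervals of weight $\varepsilon$. The point is that a light interval adds almost no value, so a machine whose valuable content is a single heavy item is effectively of ``load one''. The other half of the gadget is a \emph{chain} of $2m-1$ pairwise disjoint unit intervals $C_t=[t,t+1)$, $t=1,\dots,2m-1$, each of weight $1$. In $\OPT$ these sit on one machine, just as the short intervals do in the unweighted instance. I also need that any machine taking chain items cannot take heavy items. To force this, I will shift the heavy intervals so that each one overlaps the chain: use heavy intervals $H_{q}=[\tfrac12,\tfrac32)$, $q=1,\dots,m-1$, so that each conflicts with $C_1$. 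I will then replicate the pattern so that every heavy copy conflicts with enough chain items. If needed, I will use several chains offset by $\tfrac12$ so that heavy items and chain items interleave.

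The key steps are as follows.
\begin{enumerate}
\item Fix the gadget and compute $w(\OPT)=(m-1)+(2m-1)$, up to $O(\varepsilon)$, by exhibiting the heavy-per-machine plus one-chain-machine schedule.
\item Take any EF1 schedule $S$. Consider a machine $i$ holding a heavy item and otherwise only items it can fit, and let the poorest machine have value $v_{\min}$. Weighted EF1 gives $w(S_k)\le v_{\min}+1$ for every $k$, because the maximum item weight is $1$.
\item Do a case split as in Theorem~\ref{thm:offline-ef1-lower}. If some machine's value is about $1$, then every machine has value at most about $2$, which gives a total of at most about $2m-1$. Otherwise the heavy items are essentially unused, and the total is bounded by the chain weight, which is about $2m-1$.
\item Conclude that the ratio is $\frac{3m-2}{2m-1}-O(\varepsilon)$, which tends to $\tfrac32$.
\end{enumerate}

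I expect the main obstacle to be the conflict structure in step 3. With unit lengths, a machine holding a heavy item can still collect chain items that lie far from it in time. So the claim that ``the machine with the heavy item has value about $1$'' has to come from the geometry, and not from conflicts with everything as in the unweighted proof. My first attempt is to interleave: place one heavy copy at each half-integer offset between consecutive chain items, so that every machine alternates between heavy and chain items. I will then verify that the total value of any single machine is bounded in such a way that the $v_{\min}+1$ cap from weighted EF1 yields the $2m-1$ bound. If this interleaving fails, I will fall back on tuning the heavy weight $W$ against the chain length. In that case the required ratio only needs to approach $\tfrac32$ in the limit, and not equal $\frac{3m-2}{2m-1}$ exactly.
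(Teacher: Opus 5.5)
There is a genuine gap: you never fix an instance that works, and the obstacle you flag in step~3 is the whole content of the proof. Your concrete gadget fails outright. Take $C_t=[t,t+1)$ for $t=1,\dots,2m-1$ and $H_q=[\tfrac12,\tfrac32)$ for $q=1,\dots,m-1$, all of weight $1$, and ignore the unspecified $\varepsilon$-light items. The only conflicts are inside the clique $\{H_1,\dots,H_{m-1},C_1\}$. Put one clique member on each machine. Then distribute $C_2,\dots,C_{2m-1}$ so that $m-2$ machines get two of them and two machines get one. All loads are $2$ or $3$, so this schedule is EF1 and accepts all $3m-2$ unit-weight intervals, giving ratio $1$.

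This also shows that the dichotomy in step~3 is false. If no machine has value about $1$, the heavy items need not be unused, because a heavy item can share its machine with chain items elsewhere in time. Interleaving heavy copies at half-integer offsets has the same defect: a machine holding one copy can still take chain items and copies at other offsets. Tuning the heavy weight does not fix this, because nothing forces the machine holding a heavy item to stay poor. Indeed, a unit interval meets at most two pairwise disjoint unit intervals. So once the horizon hosts a chain of $2m-1\ge 3$ of them, the ``long interval'' of the unweighted proof has no unit-length substitute.

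The missing idea is to abandon the $\frac{3m-2}{2m-1}$ chain and compress the whole instance into the window $[0,2)$, where a unit interval \emph{can} conflict with everything. The paper pairs the machines ($m$ even). For every pair it uses $[0,1)$ and $[1,2)$ of weight $1$ and $[\tfrac12,\tfrac32)$ of weight $1-\varepsilon$, with all pairs using the same window. Then each machine holds at most two intervals. Any two-interval bundle is a $[0,1)$ plus a $[1,2)$, of total value $2$. A machine holding a middle interval has value exactly $1-\varepsilon$.

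Your cap $w(S_k)\le v_{\min}+1$ is exactly what is used next. If a middle interval is scheduled, every bundle has value at most $2-\varepsilon$, hence at most $1$, since the only possible bundle value above $1$ is $2$. If no middle interval is scheduled, only $m$ units of weight are available. So every EF1 schedule has weight at most $m$, while $\OPT$ gets $3-\varepsilon$ per pair. The ratio $\frac{3-\varepsilon}{2}$ tends to $\tfrac32$ as $\varepsilon\to 0$, already for $m=2$.

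Note that the accounting in your step~3 (one machine near $1$, the rest near $2$, total about $2m-1$) would be useless here, since $\OPT$ is only about $\tfrac32 m$. The argument needs the per-machine bound $1$. That bound comes from the weight $1-\varepsilon<1$ on the starved machine together with the two-slot horizon.
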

\begin{proof}
Fix $\varepsilon\in(0,1)$ and an even number $m\ge 2$ of machines. Since the price of fairness is a worst-case guarantee over all instances, it suffices to exhibit instances on an even number of machines.
Partition the machines into $m/2$ disjoint pairs.
For each pair $p\in\{1,\dots,m/2\}$, we create three unit-length intervals
\[
I_1^p=[0,1),\qquad
I_2^p=\Bigl[\tfrac{1}{2},\tfrac{3}{2}\Bigr),\qquad
I_3^p=[1,2),
\]
with weights $w(I_1^p)=w(I_3^p)=1$ and $w(I_2^p)=1-\varepsilon$.
Let $\I$ be the union of all these intervals.
\paragraph{Benchmark (offline optimum without fairness).}
Fix a pair $p$ and consider any feasible schedule that accepts all three intervals $\{I_1^p,I_2^p,I_3^p\}$ using the two machines of that pair.
Since $I_2^p$ overlaps both $I_1^p$ and $I_3^p$, any machine that runs $I_2^p$ cannot run either neighbor.
Therefore, to accept all three intervals, one must schedule $I_2^p$ on one machine and schedule both $I_1^p$ and $I_3^p$ (which do not overlap) on the other machine of the pair.
Hence, within each pair $p$, the unconstrained optimum value is $w(I_1^p)+w(I_2^p)+w(I_3^p)=3-\varepsilon$.
By scheduling each pair independently on its two machines, we obtain
\[
w(\OPT(\I)) \;\ge\; \frac{m}{2}\,(3-\varepsilon).
\]
\paragraph{EF1 upper bound.}
We show that any EF1 schedule has total weight at most $m$.
Consider first the case in which some interval of type $I_2$ is accepted, say $I_2^p$.
Since every interval of type $I_2$ conflicts with all other intervals, including all intervals of types $I_1$ and $I_3$, the machine containing $I_2^p$ receives total weight exactly $1-\varepsilon$.
On the other hand, no other machine can receive two accepted intervals of types $I_1$ and $I_3$: such a machine would have total weight $2$, and even after removing either one of its intervals, its remaining weight would be $1>1-\varepsilon$, contradicting EF1 with respect to the machine containing $I_2^p$.
Therefore, every machine receives weight at most $1$, except possibly the machines containing intervals of type $I_2$, whose weight is $1-\varepsilon$.
Thus the total weight of any EF1 schedule is at most $m$.
It remains to consider the case in which no interval of type $I_2$ is accepted.
Then the schedule can only use intervals of types $I_1$ and $I_3$.
Since there are $m/2$ pairs and each pair contributes at most two such intervals of unit weight, the total weight is at most
\[
\frac{m}{2}\cdot 2 = m.
\]
Hence, in all cases, every EF1 schedule $S$ of $\I$ satisfies
\[
w(S) \le m.
\]
\paragraph{Price of fairness.}
Combining the bounds, every EF1 schedule $S$ of $\I$ satisfies
\[
\frac{w(\OPT(\I))}{w(S)}
\;\ge\;
\frac{\frac{m}{2}(3-\varepsilon)}{m}
\;=\;
\frac{3-\varepsilon}{2}.
\]
Letting $\varepsilon\to 0$ yields the lower bound $\tfrac{3}{2}$.
\end{proof}
The lower-bound instances for Theorems~\ref{thm:offline-ef1-lower} and~\ref{thm:unitlength-offline-ef1-lower} are illustrated in Figures~\ref{fig:lb-two-subfigs} and~\ref{fig:unitlen-ef1-lb-figure}, respectively, for $m=4$ machines. Together, these theorems show that the $3/2$ price of fairness is intrinsic to EF1 and cannot be avoided even under strong structural restrictions on the intervals.
\begin{figure}[!t]
  \centering
  \begin{subfigure}[t]{0.485\textwidth}
    \centering
    \begin{tikzpicture}[x=0.6cm,y=0.8cm]
      \def\Xmax{2}
      \def\eps{0.10}
      \machineRow{4}{Machine 1}{\Xmax}
      \machineRow{3}{Machine 2}{\Xmax}
      \machineRow{2}{Machine 3}{\Xmax}
      \machineRow{1}{Machine 4}{\Xmax}
      \drawInterval[MyRed]{solid}{0+\eps}{1-\eps}{4}
      \node[font=\scriptsize] at (0.5,4.35) {$I_1^1$};
      \drawInterval[MyRed]{solid}{1+\eps}{2-\eps}{4}
      \node[font=\scriptsize] at (1.5,4.35) {$I_3^1$};
      \drawInterval[MyRed]{solid}{0.5+\eps}{1.5-\eps}{3}
      \node[font=\scriptsize] at (1.0,3.35) {$I_2^1$};
      \drawInterval[MyRed]{solid}{0+\eps}{1-\eps}{2}
      \node[font=\scriptsize] at (0.5,2.35) {$I_1^2$};
      \drawInterval[MyRed]{solid}{1+\eps}{2-\eps}{2}
      \node[font=\scriptsize] at (1.5,2.35) {$I_3^2$};
      \drawInterval[MyRed]{solid}{0.5+\eps}{1.5-\eps}{1}
      \node[font=\scriptsize] at (1.0,1.35) {$I_2^2$};
    \end{tikzpicture}
    \caption{Offline optimum without fairness.}
    \label{fig:unitlen-lb-opt}
  \end{subfigure}\hfill
  \begin{subfigure}[t]{0.485\textwidth}
    \centering
    \begin{tikzpicture}[x=0.6cm,y=0.8cm]
      \def\Xmax{2}
      \def\eps{0.10}
      \machineRow{4}{Machine 1}{\Xmax}
      \machineRow{3}{Machine 2}{\Xmax}
      \machineRow{2}{Machine 3}{\Xmax}
      \machineRow{1}{Machine 4}{\Xmax}
      \drawInterval[MyBlue]{solid}{0+\eps}{1-\eps}{4}
      \node[font=\scriptsize] at (0.5,4.35) {$I_1^1$};
      \drawInterval[MyBlue]{solid}{1+\eps}{2-\eps}{3}
      \node[font=\scriptsize] at (1.5,3.35) {$I_3^1$};
      \drawInterval[MyBlue]{solid}{0+\eps}{1-\eps}{2}
      \node[font=\scriptsize] at (0.5,2.35) {$I_1^2$};
      \drawInterval[MyBlue]{solid}{1+\eps}{2-\eps}{1}
      \node[font=\scriptsize] at (1.5,1.35) {$I_3^2$};
    \end{tikzpicture}
    \caption{An EF1 schedule.}
    \label{fig:unitlen-lb-ef1}
  \end{subfigure}
  \caption{Lower-bound instance for Theorem~\ref{thm:unitlength-offline-ef1-lower} in the unit-length offline setting, illustrated for $m=4$.
  Partition the machines into $m/2$ disjoint pairs. The offline optimum without fairness accepts all three intervals in each pair, whereas any EF1 schedule can accept at most two intervals per pair; one such EF1 schedule is shown.}
  \label{fig:unitlen-ef1-lb-figure}
\end{figure}

\section{Online EF1 Scheduling}
We now turn to the online setting, where intervals arrive sequentially and decisions must be made without knowledge of future inputs.
Recall that when intervals can have arbitrary weights, no constant-factor competitive guarantees are possible even on a single machine without fairness constraints~\citep{WOEGINGER19945}. This motivates the consideration of the \emph{unweighted} setting where all intervals have unit weight.
In the online model, enforcing fairness introduces an additional source of difficulty beyond feasibility, since EF1 must be maintained at all times under irrevocable rejections and limited foresight.
Our goal is to understand how much additional efficiency is lost when EF1 is imposed in the presence of online uncertainty.
We show that this loss can be bounded by a constant factor, and we provide a matching lower bound establishing the tightness of our guarantees.
A useful term to consider is the \emph{leader} of a machine. Given a partial schedule, the leader of a machine is the rightmost interval assigned to it. Note that, given the schedule, the leader is uniquely specified, and if there is no interval on a machine, we say that the leader is $\emptyset$. We will also find it useful to call a machine \emph{available} at time $t$ if an interval starting at time $t$ can be assigned to the machine without creating any conflict. Otherwise, we will call the machine \emph{busy}.
\paragraph{Greedy-Optimal Algorithm (Without Fairness).}
In the unweighted setting, the offline optimum can be achieved even in the online model.
As shown by \citet{faigle_note_1995}, there exists an online algorithm that maximizes the number of accepted intervals; we refer to this algorithm as \emph{Greedy-Optimal} and denote it by $\E$.
The Greedy-Optimal algorithm is a simple greedy procedure with revocation.
When an interval $I_i=[s_i,e_i)$ arrives, if it can be assigned to a machine without causing a conflict, it is accepted.
Otherwise, every machine's leader overlaps $I_i$, and the algorithm compares $e_i$ with the largest end time among the leaders. If $e_i$ is smaller, the algorithm revokes the leader with the largest end time and assigns $I_i$ to that machine; otherwise, $I_i$ is rejected.
This rule guarantees that the number of accepted intervals is maximized and thus achieves optimal efficiency.
\subsection{Online EF1 Algorithm}
We now present our fair online algorithm, called \emph{Greedy-Balanced}.
The algorithm is designed for the unweighted setting and builds directly on the Greedy-Optimal benchmark algorithm described above.
At a high level, Greedy-Balanced follows the same acceptance decisions as Greedy-Optimal whenever possible, but modifies the assignment of intervals to machines in order to maintain EF1 at all times.
This balancing step may require revoking previously accepted intervals, but only when necessary to preserve fairness.
We show that Greedy-Balanced maintains EF1 throughout the execution and achieves a fair competitive ratio of $2-\tfrac{1}{m}$ with respect to the offline optimum without fairness.
This guarantee is tight, as we establish in the subsequent lower bound subsection.
\paragraph{Greedy-Balanced Algorithm.}
Let $\E$ be the Greedy-Optimal online algorithm of \citet{faigle_note_1995}. Our Greedy-Balanced algorithm runs a parallel execution of $\E$ and only considers intervals that are accepted by $\E$ for scheduling.
For each arriving interval $I_i=[s_i,e_i)$, the algorithm proceeds as follows.
At a high level, our algorithm proceeds in four steps. First, it filters intervals according to $\E$ and only considers those accepted by $\E$. Second, it checks if the interval can be assigned to a feasible machine while preserving EF1. Specifically, the algorithm considers the set of available machines at time $s_i$ that are \emph{light} (i.e., the machines with minimum load) and places the interval on an arbitrary available light machine whenever such a placement exists. Third, if no such assignment is possible, the algorithm attempts to repair the current schedule by replacing a previously accepted overlapping interval with the new one.
In this case, among the light machines, it revokes the leader with the largest
end time and schedules $I_i$ in its place (provided that this leader overlaps
$I_i$ and ends after $e_i$). Fourth, if neither assignment nor replacement is possible, the interval is rejected.
Formally, the algorithm proceeds as follows.
\begin{enumerate}
    \item  \textbf{Filter by $\E$:}
    If $\E$ rejects $I_i$, then our algorithm also rejects $I_i$.
    If $\E$ revokes an interval $J$ to accept $I_i$ and our algorithm is also currently running $J$ on some machine, then it also revokes $J$ and schedules $I_i$ on the same machine.
    Otherwise, it proceeds to the next step.

     \item \textbf{EF1-preserving placement:}
    Let $\mathcal{F}_i \subseteq \A$ be the set of machines on which $I_i$ can be scheduled at time $s_i$ without creating a conflict, and let $C_\ell$ denote the current number of accepted intervals assigned to machine $\ell$.
    If $\mathcal{F}_i\neq\emptyset$, choose
    \[
    f \in \arg\min_{\ell\in\mathcal{F}_i} C_\ell
    \]
    where ties can be broken arbitrarily. If assigning $I_i$ to $f$ preserves EF1, then assign $I_i$ to $f$ and continue to the next arrival.

     \item \textbf{Repair via replacement on a light machine:}
    If assigning $I_i$ to any machine in $\mathcal{F}_i$ violates EF1, let
    \[
    L_i=\{k\in\A : C_k=\min_{\ell\in\A} C_\ell\}
    \]
    be the set of light machines.
    If there exists a machine $k\in L_i$ and an interval $J\in S_k$ such that $J$ overlaps $I_i$ and $e_J>e_i$, then replace $J$ with the interval $I_i$ on the machine $k$.
    Among all such pairs $(k,J)$, choose one with maximum $e_J$.

    \item \textbf{Otherwise, reject $I_i$:}
    If no such replacement is possible, reject $I_i$.
\end{enumerate}
We now state the performance guarantee of the Greedy-Balanced algorithm.
\begin{restatable}[Online EF1 Upper Bound: Unweighted]{theorem}{OnlineUpperBoundUnitWeight}
In the \emph{unweighted online} setting with $m$ identical machines, the \emph{Greedy-Balanced} algorithm maintains EF1 at all times and has fair competitive ratio at most $2-\tfrac{1}{m}$.
That is, for every input instance $\I$,
\[
\frac{|\OPT(\I)|}{|\ALG(\I)|} \;\le\; 2-\tfrac{1}{m}.
\]
\label{thm:gb-ef1-competitiveness}
\end{restatable}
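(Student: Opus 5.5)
The proof has two parts: EF1 at all times, and a counting argument organized by time blocks, following the outline in the technical overview.

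\emph{EF1 at all times.} We check that every step of Greedy-Balanced keeps the load difference at most one, which by \Cref{obs:ef1-loads} is exactly EF1.
\begin{itemize}
\item Step~1 (replacement mirrored from $\E$) swaps one interval for another on the same machine, so no load changes.
\item Step~2 adds an interval only if the result is EF1.
\item Step~3 replaces an interval on the same machine, so no load changes.
\item Revocations happen only inside Steps~1 and~3 and are immediately compensated. Intervals that end are not removed from the bundle, since loads count accepted intervals.
\end{itemize}
Hence the loads of any two machines differ by at most one throughout.

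\emph{Block decomposition and per-block accounting.} Let $t_k$ be the first time the minimum load equals $k$, and let block $k$ be the period $[t_k,t_{k+1})$. During a block only light machines can gain intervals, and each machine gains exactly one. So with $K$ full blocks plus a final partial block, the number of intervals held by the algorithm grows by exactly $m$ per full block.

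We compare against $\E$, which is optimal (so $|\E(\I)|=|\OPT(\I)|$). Define a \emph{loss event} as the moment at which an interval that survives in $\E$'s final schedule is discarded by the algorithm. This happens either when such an interval is rejected in Step~4, or when it is revoked in Step~3. Revocations mirrored from $\E$ in Step~1 are not losses, because $\E$ drops that interval too.

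An injective charging shows $|\E(\I)|\le|\ALG(\I)|+\#\text{losses}$:
\begin{itemize}
\item every interval of $\E$'s final schedule is either held at the end by the algorithm or was discarded at some loss event;
\item the replacements in Step~3 keep the algorithm's count unchanged, trading a later-ending interval for an earlier-ending one.
\end{itemize}
It therefore suffices to show that each block has at most $m-1$ loss events. The final partial block needs separate care: there we bound its losses by the number of light machines that gained, or by $m-1$ against the gains. Summing over blocks then gives
\[
|\OPT|\le |\ALG|+\tfrac{m-1}{m}|\ALG|,
\]
which is the claimed ratio $2-\tfrac1m$.

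\emph{Bounding losses per block (main obstacle).} We build the auxiliary graph for a block.
\begin{itemize}
\item For each loss event, the \emph{receiver} is the light machine on which the arriving interval was placed in Step~3, or a dummy node if the interval was rejected.
\item The \emph{provider} is a machine that was available at $s_i$ but heavy. Such a machine must exist whenever a loss occurs. Otherwise $\mathcal{F}_i$ contains no machine at all, or contains a light machine (in which case Step~2 would succeed). But $\E$ accepted $I_i$ and the surviving interval is compatible with $\E$'s final schedule, so an exchange argument comparing the leaders' end times to those of $\E$ should yield an available machine.
\end{itemize}

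The key step is the charging rule: each receiver pays for its last incoming edge, and every other edge is paid by its provider. We need two exchange properties.
\begin{enumerate}
\item When a provider $p$ creates a new edge, its previous receiver has already become heavy in this block. At the earlier event $p$ was available and the receiver's leader was replaced by an earlier-ending interval. If the receiver were still light now, the maximum-$e_J$ choice together with $p$'s availability would force Step~2 or Step~3 to use it.
\item When a receiver gets a second incoming edge, the earlier provider can no longer be available while heavy in the block. Its availability would have been consumed or it would conflict.
\end{enumerate}
These properties yield that every edge is the last edge of its payer, so each machine pays at most once. If all $m$ machines paid, every machine would be heavy before the last loss, contradicting that the block has not ended. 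This gives at most $m-1$ losses.

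I expect establishing the two exchange properties to be the hardest part. It requires carefully tracking leaders' end times under the maximum-$e_J$ replacement rule, and relating availability of heavy machines to $\E$'s schedule. I would first try proving them by induction on events within a block, maintaining the invariant that each light machine's leader ends no later than the leader of any heavy provider that pointed to it.
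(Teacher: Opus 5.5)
Your architecture (blocks, provider/receiver graph, ``last edge'' charging) is the paper's. However, the step that carries the proof is missing, and as you set it up it does not go through. Taking the provider to be \emph{any} available heavy machine is not enough. Existence of such a machine is easy: if every machine were busy, $\ALG$'s $m$ running intervals would be all of $\E$'s live intervals at $s_i$, so $\E$ would revoke one of them and Step~1 would apply. The charging needs two further properties.
\begin{itemize}
\item No provider of the block is also a receiver of the block. Otherwise a machine that receives while light, later turns heavy and then provides, is charged once in each role.
\item Loss events whose triggers overlap get distinct providers. This property drives both of your exchange properties. Without it your first one fails: if $p$ provides two events whose triggers $I_j, I_{j+1}$ overlap, then $r_j$ is still running $I_j$ at $s_{j+1}$, and nothing forces it to have become heavy.
\end{itemize}
The paper picks providers greedily in time order, excluding earlier receivers of the block and providers of earlier loss events whose trigger is still running. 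The real work is showing that such a machine always exists. If none did, the following would be $m+1$ distinct intervals of $\E$'s current schedule through a common point just after the event: the first lost interval of each earlier receiver, the current interval of each busy machine, the running trigger of each excluded provider, and the new trigger. This uses two facts your sketch lacks.
\begin{itemize}
\item $\E$ never revokes the trigger of a loss event. If it revoked it at some time before the trigger ends, the lost interval $J\in\E(\I)$, with $e_J$ larger, would be a conflicting leader at that moment, so $\E$'s max-end rule would not pick the trigger.
\item A lost interval stays alive through all later rejection events of its block.
\end{itemize}

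Both facts, and both exchange properties, come from one mechanism you do not identify. At a rejection event every light machine is busy, with a leader ending no later than the lost interval. The leader of a machine that stays light only moves earlier. Hence if $e^\star$ bounds the leaders of the light machines at a rejection event, the block has no rejection event after $e^\star$, since a machine still light then would be light and available. Together with non-overlap of triggers sharing a provider, this gives three things.
\begin{itemize}
\item Your property~1: at $s_{j+1}>e_{I_j}$, a still-light $r_j$ would be available.
\item Your property~2: apply the mechanism at the second incoming edge of $r$, whose lost interval ends no later than the first trigger.
\item A third case you omit: an edge into the dummy node is charged to its provider, so it must also be that provider's last edge.
\end{itemize}
Your justifications (``the maximum-$e_J$ choice would force Step~2 or Step~3 to use it'', ``its availability would have been consumed'') do not establish any of these.

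Finally, the aggregation fails as written. Per-block bounds cannot be summed to $\tfrac{m-1}{m}|\ALG(\I)|$, because the final partial block can contain one gain and one loss. Bounding its losses by its gains or by $m-1$ does not repair this. The fix is that $B_0$ has no rejection events, since a light machine there is empty and hence available. The at most $m-1$ losses of block $k\ge 1$ can then be paired with the $m$ gains of block $k-1$. With $\omega$ completed blocks this gives $|\E(\I)|\le|\ALG(\I)|+\omega(m-1)$ and $|\ALG(\I)|\ge\omega m$.
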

\begin{proof}
Recall that the Greedy-Optimal algorithm $\E$ is optimal in the unweighted online setting, and therefore
\[
|\E(\I)| = |\OPT(\I)|
\]
for every instance $\I$. It thus suffices to prove that
\[
\frac{|\E(\I)|}{|\ALG(\I)|} \le 2-\frac{1}{m}.
\]
In what follows, we identify $\E(\I)$ with the final set of intervals accepted by $\E$ on the instance $\I$; that is, an interval belongs to $\E(\I)$ if and only if $\E$ accepts it upon arrival and never revokes it.
We assume that all $s_j$ and $e_j$ are distinct and strictly positive. This is without loss of generality since although the resulting schedule may differ, the number of accepted intervals would be the same.
The next observation follows from the description of the algorithm.
\begin{observation}
The \emph{Greedy-Balanced} algorithm is EF1 at all times. Furthermore, the loads are non-decreasing with time.
\label{obs:EF1-NonDecreasingLoads}
\end{observation}
Note that, by the above observation together with \Cref{obs:ef1-loads}, at any time point $t$, there exists an integer $\alpha$ such that each machine has a load of $\alpha$ or $\alpha +1$.
Furthermore, before any machine can reach load $\alpha+2$, every machine must
first reach load $\alpha+1$. Therefore, at each time $t$, we call a machine \emph{light} if its load at time $t$ is minimum among all machines, and \emph{heavy} otherwise; that is, the light machines are those with load $\alpha$ and the heavy machines are those with load $\alpha+1$. Note that both notions depend on the time $t$: a machine that is heavy at some time becomes light again once all other machines have caught up.
This motivates a natural indexing of the timeline into \emph{blocks} as follows; see \Cref{fig:blocks} for an illustration.
For convenience, let $t_0:=0$, and for each $k\ge 1$, let $t_k$ be the first time at which every machine has load exactly $k$. Let $\omega\ge 0$ be the largest $k$ such that $t_k$ is finite. For each $k=0,\ldots,\omega-1$, we define the $k$-th \emph{block} by
\[
B_k := [t_k,t_{k+1}),
\]
and we define the final block by $B_\omega := [t_\omega,\infty)$.
Note that the interval that brings a machine from load $k$ to $k+1$ for the first time might be revoked later. Recall that revocation does not decrease the load of a machine.
\begin{figure}[t]
\centering
\begin{tikzpicture}[x=1.4cm,y=0.95cm,>=stealth,
    event/.style={circle, fill=black, inner sep=1.5pt},
]
\fill[gray!12] (0,-0.5)   rectangle (2.0,2.55);
\fill[gray!24] (2.0,-0.5) rectangle (4.4,2.55);
\fill[gray!12] (4.4,-0.5) rectangle (6.6,2.55);
\fill[gray!24] (6.6,-0.5) rectangle (7.7,2.55);
\node[font=\small] at (1.0,2.85)  {$B_0$};
\node[font=\small] at (3.2,2.85)  {$B_1$};
\node[font=\small] at (5.5,2.85)  {$B_2$};
\node[font=\small] at (7.15,2.85) {$B_3\cdots$};
\foreach \y/\name in {2/M_1,1/M_2,0/M_3}{
  \draw[gray!60] (0,\y) -- (7.7,\y);
  \node[left, font=\small] at (-0.15,\y) {$\name$};
}
\draw[line width=2.4pt, orange!85!black] (0.5,2) -- (2.0,2);
\draw[line width=2.4pt, orange!85!black] (1.2,1) -- (2.0,1);
\draw[line width=2.4pt, orange!85!black] (2.6,1) -- (4.4,1);
\draw[line width=2.4pt, orange!85!black] (3.5,2) -- (4.4,2);
\draw[line width=2.4pt, orange!85!black] (5.0,0) -- (6.6,0);
\draw[line width=2.4pt, orange!85!black] (5.8,1) -- (6.6,1);
\foreach \x/\y/\l in {0.5/2/1, 1.2/1/1, 2.0/0/1,
                      2.6/1/2, 3.5/2/2, 4.4/0/2,
                      5.0/0/3, 5.8/1/3, 6.6/2/3}{
  \node[event] at (\x,\y) {};
  \node[font=\scriptsize, above=2.5pt] at (\x,\y) {$\l$};
}
\foreach \x/\t in {0/{t_0=0}, 2.0/{t_1}, 4.4/{t_2}, 6.6/{t_3}}{
  \draw[dashed] (\x,-0.5) -- (\x,2.55);
  \node[below, font=\small] at (\x,-0.55) {$\t$};
}
\draw[->] (0,-0.5) -- (7.95,-0.5) node[right, font=\small] {time};
\begin{scope}[shift={(0.2,-1.5)}]
  \node[event] at (0,0) {};
  \node[right, font=\small] at (0.12,0) {acceptance event (label $=$ new load)};
  \draw[line width=2.4pt, orange!85!black] (4.55,0) -- (4.95,0);
  \node[right, font=\small] at (5.05,0) {heavy};
  \draw[gray!60] (6.25,0) -- (6.65,0);
  \node[right, font=\small] at (6.75,0) {light};
\end{scope}
\end{tikzpicture}
\caption{Illustration of the blocks for $m=3$ machines. Each dot marks an acceptance event, labeled with the resulting load of the machine; recall that revocations do not change loads, so loads are non-decreasing. For each $k\ge 1$, $t_k$ is the first time at which every machine has load exactly $k$, and $B_k=[t_k,t_{k+1})$. Within block $B_k$, a machine is \emph{light} (thin line) if its load is $k$ and \emph{heavy} (thick line) if its load is $k+1$; each block ends with the acceptance event that brings the last machine to load $k+1$, at which point every machine is light with respect to the next block. In particular, for every $k<\omega$, exactly $m$ acceptance events occur in $(t_k,t_{k+1}]$, i.e., $A_k=m$.}
\label{fig:blocks}
\end{figure}
Call the start time $s_i$ an \emph{acceptance event} if the interval is assigned in step~(2), and a \emph{rejection event} if it revokes an interval in step~(3) or if it gets rejected in step~(4).
Let $A_k$ be the number of acceptance events in $(t_k,t_{k+1}]$ for each $k<\omega$, and let $A_\omega$ be the number of acceptance events in $(t_\omega,\infty)$ (so that the acceptance event occurring at time $t_{k+1}$, which completes the block, is counted towards $B_k$; rejection events occur at none of the times $t_k$, so no such convention is needed for them).
Observe that for every $0\le k<\omega$, we have $A_k=m$, since block $B_k$ ends exactly when every machine has received one more accepted interval. Furthermore, there are no rejection events in block $B_0$, since at time $t_0=0$ there are no intervals assigned to any machine, so the first $m$ intervals accepted by $\E$ are also accepted by $\ALG$ and assigned to distinct machines.
With every rejection event we associate its \emph{trigger}, its \emph{lost interval}, and its \emph{receiver}, as follows. Consider a rejection event at time $s_i$, caused by an arriving interval $I_i=[s_i,e_i)$; we call $I_i$ the trigger of the event. If the event occurs in step~(3), then \ALG{} revokes some interval $J$ from a machine $r$ in order to accept $I_i$; in this case, the lost interval of the event is $J$, and its receiver is the machine $r$. Otherwise, the event occurs in step~(4) and \ALG{} rejects $I_i$; in this case, the lost interval is $I_i$ itself, and the receiver is a dummy node $\bot$. Note that, in either case, the lost interval contains $s_i$ and ends no earlier than the trigger: for a step~(3) event this holds by the condition $e_J>e_i$ of step~(3), and for a step~(4) event the lost interval is the trigger.
The remainder of the proof is organized as follows. After recording a few structural facts about the two schedules (\Cref{obs:mirror}, \Cref{obs:rejection-structure}, \Cref{obs:leader-decay}, and \Cref{claim:no-late-rejections}), we single out the rejection events that actually hurt the algorithm, called \emph{loss events}, at which \ALG{} loses
an interval that $\E$ keeps until the end. The Provider Lemma (\Cref{lem:providers}) associates with each loss event a heavy machine, its \emph{provider}, which records spare capacity that the fairness constraint prevented \ALG{} from using. We then build an auxiliary graph $G_k$ on the machines, in which every loss event of block $B_k$ is captured by an edge from its provider to its receiver. Finally, we charge each edge of $G_k$ to one of the machines, in such a way that every machine is charged at most once and at least one machine remains uncharged; hence each block contains at most $m-1$ loss events, and the theorem follows.
Since \ALG{} schedules an interval only if $\E$ accepts it upon arrival, and mirrors the revocations of $\E$ in step~(1), the schedule of \ALG{} is at all times contained in the current schedule of $\E$.
\begin{observation}\label{obs:mirror}
At every point in time, every interval currently scheduled by \ALG{} is currently accepted by $\E$.
\end{observation}
\begin{observation}\label{obs:rejection-structure}
Consider a rejection event at time $s_i$ with trigger $I_i$ and lost interval $J$. Then every light machine is busy at time $s_i$ (equivalently, every available machine is heavy), and the leader $J_\ell$ of every light machine $\ell$ satisfies $e_{J_\ell}\le e_J$.
\end{observation}
\begin{observation}\label{obs:leader-decay}
Consider a machine that remains light throughout a time window contained in a block. Then the end time of its leader can only decrease over this window.
\end{observation}
\begin{claim}[No late rejection events]\label{claim:no-late-rejections}
Let $s$ be a rejection event in block $B_k$, and let $e^\star$ be an upper bound on the end times of the leaders of all machines that are light at time $s$. Then no rejection event occurs in $B_k$ after time $e^\star$.
\end{claim}
\begin{proof}
Suppose that a rejection event occurs at a time $s'\in B_k$ with $s'>e^\star$. Since $s'\in B_k$, not every machine has load $k+1$ at time $s'$ (for $k<\omega$ this would give $t_{k+1}\le s'$, contradicting $s'\in B_k$; for the final block it would make $t_{\omega+1}$ finite, contradicting the maximality of $\omega$). Hence some machine $\ell$ is light at time $s'$. Since loads never decrease and all loads in $B_k$ are $k$ or $k+1$, machine $\ell$ is light throughout $[s,s']$. By \Cref{obs:leader-decay}, the leader of $\ell$ at time $s'$ has end time at most its end time at $s$, which is at most $e^\star<s'$; hence $\ell$ is available at time $s'$. Thus $\ell$ is light and available at the rejection event $s'$, contradicting \Cref{obs:rejection-structure}.
\end{proof}
Applying \Cref{claim:no-late-rejections} with $e^\star=e_J$, which is a valid choice by \Cref{obs:rejection-structure}, yields the following consequence: the lost interval of a rejection event remains alive throughout the remaining rejection events of its block.
\begin{corollary}\label{cor:lost-alive}
Let $s<s'$ be rejection events in the same block, and let $J$ be the lost interval of the event at time $s$. Then $s'<e_J$; in particular, since $s_J<s$, the interval $J$ contains $s'$.
\end{corollary}
\paragraph{Loss events.}
We call a rejection event a \emph{loss event} if its lost interval belongs to $\E(\I)$, that is, if the event costs \ALG{} an interval that $\E$ keeps until the end. Rejection events that are not loss events cost \ALG{} only intervals that $\E$ itself eventually revokes, and our analysis will not need to count them.
The next lemma shows that the trigger of a loss event is also kept by $\E$ until the end. This fact is the backbone of the charging argument below: it allows us to compare the triggers of different loss events within the final schedule of $\E$.
\begin{lemma}\label{lem:trigger-kept}
If a rejection event loses an interval of $\E(\I)$, then $\E$ never revokes its trigger. In particular, the trigger of every loss event belongs to $\E(\I)$.
\end{lemma}
\begin{proof}
For a step~(4) event the trigger is the lost interval itself, and there is nothing to prove. Consider a step~(3) event at time $s_i$ with trigger $I_i$ that revokes an interval $J\in\E(\I)$; by the condition of step~(3), $e_i<e_J$. Note that $\E$ accepts $I_i$ upon arrival, since otherwise \ALG{} would have rejected $I_i$ in step~(1).
Suppose, towards a contradiction, that $\E$ revokes $I_i$ at the arrival of some interval at time $s'$. Since $\E$ revokes only an interval that conflicts with the arriving interval, we have $s_i<s'<e_i<e_J$. At time $s'$, the interval $J$ is scheduled by $\E$ on some machine ($\E$ accepts $J$ upon arrival and never revokes it, and $s_J<s_i<s'$). Every other interval on that machine starts before $s'<e_J$ and is disjoint from $J$, and hence ends before $s_J$; thus $J$ is the leader of its machine, with end time $e_J$. Moreover, $J$ contains $s'$ and therefore also conflicts with the arriving interval. Since $\E$ revokes the conflicting leader with the \emph{largest} end time and $e_J>e_i$, the revoked interval is not $I_i$, a contradiction.
\end{proof}
\paragraph{Providers.}
Fix a block $B_k$, and let $s^1<\cdots<s^q$ be its loss events, with triggers $T^1,\ldots,T^q$ (all in $\E(\I)$, by \Cref{lem:trigger-kept}) and lost intervals $J^1,\ldots,J^q$ (all in $\E(\I)$, by definition). The charging argument associates with each loss event a machine that ``pays'' for it, called the \emph{provider} of the event. The next lemma shows that providers can be chosen with three properties: each provider is available, and hence heavy at its event; no provider coincides with a receiver of the block; and loss events whose triggers overlap receive distinct providers. We emphasize that providers are an analysis device, chosen with hindsight: like the loss events themselves, they depend on the final set $\E(\I)$ and are not known to the algorithm at the time of the events.
\begin{lemma}[Provider Lemma]\label{lem:providers}
For each block $B_k$, one can assign to each loss event of $B_k$ a machine $p^j$, called its \emph{provider}, such that:
\begin{enumerate}
    \item $p^j$ is available, and hence heavy, at time $s^j$;
    \item no provider of a loss event of $B_k$ is the receiver of a loss event of $B_k$; and
    \item if the triggers $T^i$ and $T^j$ of two loss events overlap, then $p^i\neq p^j$; equivalently, the triggers of loss events that share a provider are pairwise non-overlapping.
\end{enumerate}
\end{lemma}
\begin{proof}
We choose the providers greedily, processing the loss events of $B_k$ in order of time. At event $j$, we choose $p^j$ to be any machine that (a) is available at time $s^j$, (b) is not the receiver of an earlier loss event of $B_k$, and (c) is not the provider $p^i$ of an earlier loss event whose trigger is still
running, i.e., satisfies $e_{T^i}>s^j$.
These rules imply the three properties. Property~1 follows from (a), since available machines are heavy at rejection events (\Cref{obs:rejection-structure}).
For property~3, if $T^i$ and $T^j$ overlap with $i<j$, then $e_{T^i}>s^j$, so rule~(c) excludes $p^i$ at event $j$. For property~2, rule~(b) excludes the receivers of all earlier events; the receiver of the event at $s^j$ is busy while $p^j$ is available; and every later receiver is light when chosen in
step~(3), while $p^j$, heavy at $s^j$ by property~1, remains heavy for the rest of the block.
It remains to show that a machine satisfying (a)--(c) always exists. Suppose that at some loss event $j$ no machine does; we exhibit $m+1$ distinct intervals that are simultaneously accepted by $\E$ and contain a common point just after $s^j$, contradicting the feasibility of the schedule of $\E$ on $m$
machines. Assign to each machine $\mu$ an interval $f(\mu)$ as follows: if $\mu$ is the receiver of an earlier loss event of $B_k$, let $f(\mu)$ be the lost interval of the \emph{first} such event; otherwise, if $\mu$ is busy at time $s^j$, let $f(\mu)$ be the interval that \ALG{} currently runs on $\mu$;
otherwise $\mu$ is available and not an earlier receiver, so by our assumption rule~(c) excludes it, i.e., $\mu=p^i$ for an earlier loss event whose trigger $T^i$ is still running, and we let $f(\mu):=T^i$ (by property~3, at most one trigger provided by $\mu$ is still running).
All these intervals, together with the current trigger $T^j$, contain the points immediately after $s^j$: first lost intervals contain $s^j$ by \Cref{cor:lost-alive}, the intervals of busy machines and still-running triggers contain $s^j$ by definition, and $T^j$ starts at $s^j$. Moreover, all of them are accepted by $\E$ just after time $s^j$: first lost intervals and triggers belong to $\E(\I)$ (by the definition of loss events and
\Cref{lem:trigger-kept}), and the intervals currently run by \ALG{} are currently accepted by $\E$ (\Cref{obs:mirror}) and are not revoked by $\E$ at the arrival of $T^j$, since otherwise step~(1) would apply and $s^j$ would not be a rejection event.
Finally, the $m+1$ intervals are pairwise distinct. Distinct receiver-case values are intervals lost at distinct events, and an interval is lost at most once; lost intervals are no longer run by \ALG{}, so they differ from the busy-case values, which are pairwise distinct as well. A trigger-case value $T^i$ differs from all busy-case and receiver-case values: if the event at $s^i$ occurred in step~(4), then $T^i$ was never scheduled by \ALG{} and never lost at a step-(3) event; if it occurred in step~(3), then $T^i$ only ever runs on the receiver $r^i$ of that event, which is a receiver-case machine, and
$T^i$ can be lost from $r^i$ only after $s^i$, whereas the first loss event received by $r^i$ occurs no later than $s^i$. Trigger-case values of distinct machines belong to distinct events and are therefore distinct. Lastly, $T^j$ arrives at time $s^j$, while all other listed intervals arrive earlier. This yields $m+1$ distinct intervals accepted by $\E$ and sharing a common point, a contradiction.
\end{proof}
\paragraph{The auxiliary graph $G_k$ of block $B_k$.}
Fix a block $B_k$, and fix a provider assignment for its loss events as in \Cref{lem:providers}. We associate with $B_k$ a directed graph $G_k$, which we call the \emph{auxiliary} graph, whose vertices are the machines together with a dummy node $\bot$, and which contains one edge per loss event of $B_k$: the edge of a loss event goes from its provider to its receiver (the dummy node $\bot$ for a step~(4) event). Let
\[
R_k := |E(G_k)|
\]
denote the number of edges of $G_k$, i.e., the number of loss events of $B_k$.
Each edge of $G_k$ is associated with a loss event, and hence with the start time of its trigger; we call this time the \emph{creation time} of the edge, and we order the edges of $G_k$ by increasing creation time.
Our goal is to bound the number of edges $R_k$. Towards this, note first that by \Cref{lem:providers} the graph $G_k$ is \emph{bipartite}: all provider nodes are contained in one part and all receiver nodes in the other. (Thus, within a block, a machine can be a provider, or a receiver, or neither.) We then use a \emph{charging} argument that charges each edge of $G_k$ to a distinct vertex (or machine), with at least one machine left uncharged; this yields $R_k\le m-1$. Subsequently, we show how this bound provides the desired bound on the fair competitive ratio of \ALG.
\begin{lemma}\label{lem:providers-receivers-disjoint}
The sets of providers and receivers in the graph $G_k$ are disjoint. Moreover, every provider is heavy from the creation time of its first outgoing edge onward.
\end{lemma}
\begin{proof}
The first part is property~2 of \Cref{lem:providers}. The second part follows from property~1 of \Cref{lem:providers} together with \Cref{obs:EF1-NonDecreasingLoads}: a provider is heavy at the creation time of each of its edges, and loads never decrease.
\end{proof}
\paragraph{Charging scheme.}
Recall that each edge of the auxiliary graph $G_k$ is directed from its provider to its receiver (possibly the dummy node $\bot$).
We charge each edge of $G_k$ either to its receiver or to its provider, according to the following rule.
\begin{quote}
    Each receiver is charged the \emph{last} incoming edge it receives in $G_k$. Every other edge (an edge entering the dummy node $\bot$, or an edge entering a receiver that later receives another incoming edge) is charged to its provider.
\end{quote}
By construction, every edge is charged to exactly one node, and every receiver is charged at most once. The main difficulty is to show that each provider is charged at most once. To this end, we establish three structural properties of $G_k$ (\Cref{claim:earlier-receivers-heavy,claim:two-incoming-kill,claim:dummy-kill}), which together imply the key fact behind the charging scheme: every edge charged to a provider is the \emph{last} edge leaving that provider (\Cref{cor:last-in-last-out,claim:dummy-kill}).
Fix a provider $p$ and consider the edges leaving $p$ in order of creation time.
We first show that whenever $p$ creates a new outgoing edge, its previous receiver must already be heavy; since step~(3) selects only light machines as receivers, that machine can never receive an edge again.
\begin{claim}[A new outgoing edge kills the previous receiver]\label{claim:earlier-receivers-heavy}
Fix a provider $p$ in $G_k$, and let
\[
p\to r_1,\ p\to r_2,\ \ldots,\ p\to r_{q-1},\ p\to r_q
\]
be the edges leaving $p$, listed in order of creation time, where each $r_j$ is a receiver machine and $r_q$ is either a receiver machine or the dummy node $\bot$. Then, for every $j<q$, the machine $r_j$ is heavy when the edge $p\to r_{j+1}$ is created, and no edge created afterwards can enter $r_j$.
\end{claim}
\begin{proof}
Fix $j<q$. Let the edge $p\to r_j$ be created by a step~(3) loss event triggered by an interval $I_j$, which \ALG{} places on machine $r_j$, and let the edge $p\to r_{j+1}$ be created by a loss event triggered by an interval $I_{j+1}$.
Since both edges leave $p$, the triggers $I_j$ and $I_{j+1}$ share the provider $p$, and are therefore non-overlapping by property~3 of \Cref{lem:providers}; hence $s_{j+1}>e_{I_j}$.
Suppose that $r_j$ is light at time $s_{j+1}$. Then $r_j$ is light throughout $[s_j,s_{j+1}]$, and $I_j$ is its leader at time $s_j$; by \Cref{obs:leader-decay}, the leader of $r_j$ at time $s_{j+1}$ ends at most at $e_{I_j}<s_{j+1}$, so $r_j$ is available. Thus $r_j$ is light and available at the rejection event $s_{j+1}$, contradicting \Cref{obs:rejection-structure}. Hence $r_j$ is heavy at time $s_{j+1}$.
Since loads never decrease within a block, $r_j$ remains heavy for the rest of the block, and since step~(3) selects only light machines as receivers, no edge created after $p\to r_{j+1}$ can enter $r_j$.
\end{proof}
The next claim shows that when a receiver of $p$ acquires a second incoming edge, which by \Cref{claim:earlier-receivers-heavy} can only happen before the next edge leaving $p$ is created, the provider $p$ is \emph{killed}: it cannot create any further edge in the block.
\begin{claim}[A new incoming edge kills the previous provider]\label{claim:two-incoming-kill}
Let $r$ be a receiver in $G_k$, and let
\[
p_1 \to r \qquad\text{and}\qquad p_2 \to r
\]
be two edges in $G_k$ such that the edge from $p_1$ is created before the edge from $p_2$, and let $s_2$ be the creation time of $p_2\to r$. Then $p_1$ cannot be the provider of any edge created after time $s_2$.
\end{claim}
\begin{proof}
    Let the edge $p_1\to r$ be created at time $s_1$ by a step~(3) loss event triggered by an interval $I_1$, which \ALG{} places on machine $r$, and let the edge $p_2\to r$ be created at time $s_2$ by a step~(3) loss event with lost interval $J$; since $J$ contains $s_2$, it is the leader of $r$ at time $s_2$.
    Since $r$ is chosen as a receiver at both times, it is light at times $s_1$ and $s_2$, and hence, by \Cref{obs:EF1-NonDecreasingLoads}, throughout $[s_1,s_2]$. Interval $I_1$ is the leader of $r$ at time $s_1$, so by \Cref{obs:leader-decay} we have $e_J\le e_{I_1}$. By \Cref{obs:rejection-structure}, the leader of every machine that is light at time $s_2$ ends at most at $e_J\le e_{I_1}$.
    Suppose, towards a contradiction, that $p_1$ is the provider of an edge of $G_k$ created at some time $s_I>s_2$, with trigger $I$. Since $I$ and $I_1$ share the provider $p_1$, they are non-overlapping by property~3 of \Cref{lem:providers}, and hence $s_I>e_{I_1}$. But then the loss event at time $s_I$ is a rejection event of $B_k$ occurring after time $e_{I_1}$, contradicting \Cref{claim:no-late-rejections} applied at the rejection event $s_2$ with $e^\star=e_{I_1}$.
\end{proof}
Combining the two claims yields the key structural property behind the charging scheme.
\begin{figure}[t]
\centering
\begin{tikzpicture}[
    x=1cm,y=1cm,
    provider/.style={circle, draw, minimum size=8mm, inner sep=0pt},
    receiver/.style={circle, draw, minimum size=8mm, inner sep=0pt},
    >=stealth
]
\node[font=\small] at (1.8,2.8) {if $p_1\to r_2$ is created};
\node[provider] (p1a) at (0,1.5) {$p_1$};
\node[provider] (p2a) at (0,0) {$p_2$};
\node[receiver] (r1a) at (3.6,1.5) {$r_1$};
\node[receiver] (r2a) at (3.6,0) {$r_2$};
\draw[->, thick] (p1a) -- (r1a);
\draw[->, thick] (p2a) -- (r2a);
\draw[->, very thick] (p1a) -- (r2a);
\draw[->, dashed] (p2a) -- (r1a);
\node[font=\small] at (8.2,2.8) {if $p_2\to r_1$ is created};
\node[provider] (p1b) at (6.4,1.5) {$p_1$};
\node[provider] (p2b) at (6.4,0) {$p_2$};
\node[receiver] (r1b) at (10,1.5) {$r_1$};
\node[receiver] (r2b) at (10,0) {$r_2$};
\draw[->, thick] (p1b) -- (r1b);
\draw[->, thick] (p2b) -- (r2b);
\draw[->, very thick] (p2b) -- (r1b);
\draw[->, dashed] (p1b) -- (r2b);
\node[font=\small] at (0,2.25) {providers};
\node[font=\small] at (3.6,2.25) {receivers};
\node[font=\small] at (6.4,2.25) {providers};
\node[font=\small] at (10,2.25) {receivers};
\end{tikzpicture}
\caption{Visual illustration of Claims~\ref{claim:earlier-receivers-heavy} and~\ref{claim:two-incoming-kill}.
Starting from edges $p_1\to r_1$ and $p_2\to r_2$, once one crossed edge is created, the other crossed edge cannot occur later. Indeed, a new outgoing edge kills the previous receiver, while a new incoming edge kills the previous provider. Dashed edges indicate configurations that cannot arise.}
\label{fig:crossed-edges-impossible}
\end{figure}
\begin{corollary}\label{cor:last-in-last-out}
If an edge $p\to r$ of $G_k$ is not the last incoming edge of its receiver $r$, then it is the last outgoing edge of its provider $p$.
\end{corollary}
\begin{proof}
Let $e'$ be an incoming edge of $r$ created after $p\to r$, say at time $s'$. Suppose, towards a contradiction, that $p$ creates an edge after $p\to r$, and let $p\to r''$ be the first such edge. By \Cref{claim:earlier-receivers-heavy}, no edge can enter $r$ after $p\to r''$ is created; hence $e'$ is created before $p\to r''$. But then \Cref{claim:two-incoming-kill} implies that $p$ cannot be the provider of any edge created after time $s'$, contradicting the existence of $p\to r''$. Hence $p\to r$ is the last edge leaving $p$.
\end{proof}
The previous claim handles the case in which a provider is killed by a later incoming edge at one of its receivers; see Figure~\ref{fig:crossed-edges-impossible} for an illustration of this interaction together with \Cref{claim:earlier-receivers-heavy}. We now consider the remaining case, in which a provider creates an edge to the dummy node $\bot$, and show that this event also kills the provider; in particular, an edge entering $\bot$ is likewise the last edge leaving its provider.
\begin{claim}[An incoming edge to $\bot$ kills its provider]\label{claim:dummy-kill}
Let $p\to \bot$ be an edge in $G_k$, and let $s$ be its creation time. Then $p$ cannot be the provider of any edge created after time $s$.
\end{claim}
\begin{proof}
    Let the edge $p\to\bot$ be created by a step~(4) loss event triggered by an interval $I_i=[s_i,e_i)$, so that $s_i=s$ and the lost interval is $I_i$ itself. By \Cref{obs:rejection-structure}, the leader of every machine that is light at time $s_i$ ends at most at $e_i$.
    Suppose, towards a contradiction, that $p$ is the provider of an edge of $G_k$ created at some time $s_I>s_i$, with trigger $I$. Since $I$ and $I_i$ share the provider $p$, they are non-overlapping by property~3 of \Cref{lem:providers}, and hence $s_I>e_i$. But then the loss event at time $s_I$ is a rejection event of $B_k$ occurring after time $e_i$, contradicting \Cref{claim:no-late-rejections} applied at the rejection event $s_i$ with $e^\star=e_i$.
\end{proof}
The previous claims establish the structural properties of $G_k$ needed for the charging argument; see Figure~\ref{fig:charging-bipartite-graph} for an illustration. In particular, every edge charged to a provider (an edge entering $\bot$, or an edge that is not the last incoming edge of its receiver) is the last edge leaving that provider, by \Cref{claim:dummy-kill} and \Cref{cor:last-in-last-out}, respectively. We can now show that each machine is charged at most once.
\begin{figure}[t]
\centering
\begin{tikzpicture}[
    x=1cm,y=1cm,
    provider/.style={circle, draw, minimum size=6.5mm, inner sep=0pt},
    chargedprovider/.style={circle, draw, minimum size=6.5mm, inner sep=0pt, line width=1pt, font=\bfseries},
    receiver/.style={circle, draw, minimum size=6.5mm, inner sep=0pt},
    chargedreceiver/.style={circle, draw, minimum size=6.5mm, inner sep=0pt, line width=1pt, font=\bfseries},
    dummy/.style={rectangle, draw, rounded corners=1pt, minimum width=6.5mm, minimum height=6.5mm, inner sep=1pt, fill=gray!15},
    >=stealth
]
\node[chargedprovider] (p1) at (0,2.1) {$p_1$};
\node[provider]        (p2) at (0,1.05) {$p_2$};
\node[chargedprovider] (p3) at (0,0) {$p_3$};
\node[chargedprovider] (p4) at (0,-1.05) {$p_4$};
\node[chargedreceiver] (r1) at (3.5,2.1) {$r_1$};
\node[chargedreceiver] (r2) at (3.5,1.05) {$r_2$};
\node[chargedreceiver] (r3) at (3.5,0) {$r_3$};
\node[dummy]           (d)  at (3.5,-1.22) {$\bot$};
\draw[->, very thick, green!60!black] (p1) -- (r2);
\draw[->, very thick, blue] (p2) -- (r2);
\draw[->, very thick, blue] (p1) -- (r1);
\draw[->, very thick, blue] (p3) -- (r3);
\draw[->, very thick, green!60!black] (p3) to[bend right=14] (d);
\draw[->, very thick, green!60!black] (p4) to[bend right=7] (d);
\node[font=\small] at (0,2.8) {providers};
\node[font=\small] at (3.5,2.8) {receivers};
\node[font=\small] at (3.5,-1.82) {dummy node};
\begin{scope}[shift={(5.15,0.7)}]
    \draw[->, very thick, green!60!black] (0,0.65) -- (0.85,0.65);
    \node[right, font=\small] at (1.0,0.65) {charged to provider};
    \draw[->, very thick, blue] (0,0.05) -- (0.85,0.05);
    \node[right, font=\small] at (1.0,0.05) {charged to receiver};
    \node[chargedprovider] (lp) at (0.42,-0.72) {\scriptsize $v$};
    \node[right, font=\small] at (1.0,-0.72) {charged vertex};
\end{scope}
\end{tikzpicture}
\caption{Illustration of the charging scheme on the bipartite graph $G_k$ for a block $B_k$. Providers appear on the left, receivers on the right, and the dummy node $\bot$ represents step~(4) edges. The last incoming edge of each receiver is charged to that receiver (blue). All remaining edges (earlier incoming edges and edges entering $\bot$) are charged to their providers (green); each such edge is the last edge leaving its provider. Here, $p_1\to r_2$ is created before $p_2\to r_2$; note that machine $p_2$ is uncharged.}
\label{fig:charging-bipartite-graph}
\end{figure}
\begin{lemma}\label{lem:charged-once}
Each machine is charged at most once.
\end{lemma}
\begin{proof}
    By the charging rule, a receiver is charged only its last incoming edge, and is therefore charged at most once.
    Now consider a provider $p$. An edge leaving $p$ is charged to $p$ in exactly two cases: it enters the dummy node $\bot$, or it enters a receiver $r$ and is not the last incoming edge of $r$. In the first case, the edge is the last edge leaving $p$ by \Cref{claim:dummy-kill}; in the second case, it is the last edge leaving $p$ by \Cref{cor:last-in-last-out}. Since at most one edge leaving $p$ can be its last, at most one edge is charged to $p$.
    Finally, by \Cref{lem:providers-receivers-disjoint}, no machine is both a provider and a receiver in $G_k$, so no machine can be charged once in each role. Therefore each machine is charged at most once.
\end{proof}
\begin{lemma}\label{lem:uncharged-machine}
At least one machine is uncharged.
\end{lemma}
\begin{proof}
    If some machine is neither a provider nor a receiver in $G_k$, then it is never charged, and we are done. We may therefore assume that every machine is a provider or a receiver; by \Cref{lem:providers-receivers-disjoint}, these two sets partition the machines. Suppose, towards a contradiction, that every machine is charged. Let $s$ be the creation time of the last edge of $G_k$, and note that $s<t_{k+1}$, since all edges of $G_k$ are created inside the block $B_k=[t_k,t_{k+1})$.
    We show that every machine is heavy at time $s$, which yields the desired contradiction. By \Cref{lem:providers-receivers-disjoint}, every provider is heavy from the creation of its first outgoing edge; since loads never decrease within a block, all providers are heavy at time $s$.
    Now consider any receiver $r$, and let $q\to r$ be its last incoming edge, i.e., the edge charged to $r$. We claim that $q\to r$ is not the last edge leaving $q$. Indeed, only the last edge leaving a provider can be charged to it (by \Cref{claim:dummy-kill} and \Cref{cor:last-in-last-out}); since $q\to r$ is charged to $r$, if it were the last edge leaving $q$, then no edge would be charged to $q$, contradicting our assumption. Hence $q$ creates a further edge after $q\to r$, and no later than time $s$. By \Cref{claim:earlier-receivers-heavy}, machine $r$ is heavy from the creation of that edge onward, in particular, at time $s$.
    Thus every machine is heavy at time $s$, i.e., every machine has load $k+1$, so $t_{k+1}\le s$, contradicting $s<t_{k+1}$. For the final block $B_\omega$, the same argument shows that every machine would have load $\omega+1$ at time $s$, so $t_{\omega+1}$ would be finite, contradicting the maximality of $\omega$.
\end{proof}
\begin{claim}\label{claim: m-1 loss}
For every $0\le k \le \omega$, we have $R_k \le m-1$.
\end{claim}
\begin{proof}
    By the charging scheme, every edge of $G_k$ is charged to some machine. By \Cref{lem:charged-once}, each machine is charged at most once, and by \Cref{lem:uncharged-machine}, at least one machine is uncharged. Since there are $m$ machines,
    \[
    R_k=|E(G_k)|\le m-1. \qedhere
    \]
\end{proof}
Recall that $A_k=m$ for every $0\le k<\omega$, and that $R_0=0$ since block $B_0$ contains no rejection events. Moreover, since the total load of \ALG{} increases only at acceptance events, we have $|\ALG(\I)|=\sum_{k=0}^{\omega}A_k$.
The next lemma relates $|\E(\I)|$ to the acceptance events and the edges of the auxiliary graphs: every interval that $\E$ keeps until the end is either kept by \ALG{} as well, or lost at an event that is, by definition, a loss event, and hence witnessed by an edge.
\begin{lemma}\label{lem:counting}
$\displaystyle |\E(\I)| \;\le\; \sum_{k=0}^{\omega}A_k+\sum_{k=0}^{\omega}R_k.$
\end{lemma}
\begin{proof}
Since $|\ALG(\I)|=\sum_{k}A_k$, it suffices to show that the number of intervals of $\E(\I)$ that \ALG{} does not keep until the end is at most $\sum_{k}R_k$. Let $I\in\E(\I)$ be such an interval. Since $\E$ accepts $I$ upon arrival and never revokes it, \ALG{} neither rejects nor revokes $I$ in step~(1); hence \ALG{} loses $I$ at a rejection event, by rejecting it in
step~(4) or revoking it in step~(3), and in either case $I$ is the lost interval of that event. Since $I\in\E(\I)$, this event is a loss event and therefore contributes an edge. As distinct intervals are lost at distinct rejection events, the number of such intervals is at most $\sum_{k}R_k$.
\end{proof}
By \Cref{claim: m-1 loss} and \Cref{lem:counting}, we can bound the fair competitive ratio as follows:
\begin{align*}
\frac{|\OPT(\I)|}{|\ALG(\I)|}
= \frac{|\E(\I)|}{|\ALG(\I)|}
&\le \frac{\sum_{k=0}^{\omega}A_k + \sum_{k=0}^{\omega}R_k}{\sum_{k=0}^{\omega}A_k} \\
&= 1+\frac{R_0+\sum_{k=1}^{\omega}R_k}{\sum_{k=0}^{\omega-1}A_k + A_\omega} \\
&\le 1+\frac{\omega(m-1)}{\omega m + A_\omega}
\;\le\; 1+\frac{\omega(m-1)}{\omega m}
= 2-\frac{1}{m}.
\end{align*}
This completes the proof.
\end{proof}
We next show that this fair competitive ratio is optimal by proving a matching lower bound.

\subsection{Lower Bounds}
We now show that the fair competitive ratio achieved by Greedy-Balanced is optimal among deterministic online algorithms that satisfy EF1.
Specifically, we prove that no deterministic EF1 online algorithm can achieve a fair competitive ratio strictly smaller than $2-\tfrac{1}{m}$ with respect to the offline optimum without fairness.
This lower bound shows that the additional efficiency loss observed in the online setting is unavoidable and arises from the interplay between online uncertainty and the fairness constraint.
It specifically applies to deterministic algorithms, and does not rule out the possibility of improved guarantees using randomization. Designing randomized EF1 online algorithms with better fair competitive ratios is an interesting direction for future work.

\begin{restatable}[Online EF1 Lower Bound]{theorem}{OnlineLowerBoundUnitWeight}
In the \emph{unweighted online} setting with $m$ identical machines, every deterministic online algorithm that satisfies EF1 has fair competitive ratio at least $2-\tfrac{1}{m}$.
\label{thm:lb-deterministic-ef1}
\end{restatable}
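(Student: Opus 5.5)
We will build an adaptive adversary that follows the sketch from the technical overview. It runs many disjoint time blocks, each placed after the previous block has fully ended. In each block it watches one fixed machine, machine~$1$ (by symmetry, any machine), and forces the algorithm to keep exactly $m$ intervals while $\OPT$ keeps $2m-1$. Taking $N$ blocks then gives $|\OPT(\I)|/|\ALG(\I)|\ge (2m-1)/m=2-\tfrac1m$, up to boundary effects that vanish as $N\to\infty$.

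\textbf{Block construction.} A block maintains a current window $[a,b)$, initially the whole block window. The adversary releases $I=[a,\tfrac{a+b}{2})$ and then waits until the algorithm has made all its decisions at time $a$.
\begin{itemize}
\item If machine~$1$ now runs $I$, the adversary shrinks the window to $[a',\tfrac{a+b}{2})$ for some $a'\in(a,\tfrac{a+b}{2})$. The next release is nested inside $I$, so machine~$1$ can accept it only by revoking its current interval, since every interval it accepted in this block contains the new one.
\item If machine~$1$ does not run $I$, the window becomes $[\tfrac{a+b}{2},b)$. The passed interval then ends exactly where all later releases start.
\end{itemize}
The block stops once machine~$1$ has made $m$ acceptances in it.

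Two things need care. First, ``accepting'' means the count of intervals ever placed on machine~$1$. Revoked intervals still count, because we need the block to end. Second, the adversary must keep releasing copies even if machine~$1$ never accepts. For that case we rely on EF1: machine~$1$ holds at most one live interval of the block at any time, so by \Cref{obs:ef1-loads} the number of intervals \emph{kept} by the algorithm is controlled by machine~$1$'s kept count. If machine~$1$ keeps refusing, the algorithm's total stays bounded by roughly $m\cdot(\text{machine }1\text{'s load})+m$, while every refused interval is gained by $\OPT$.

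\textbf{Counting in one block.}
\begin{itemize}
\item $\OPT$ uses the $m$ accepted intervals, which form a nested chain, one per machine. It also uses all $P$ passed intervals. These are pairwise disjoint and, by construction, disjoint from the later chain members, so they can be appended on one machine. This gives $\OPT\ge m+P$ in the block.
\item For the algorithm, machine~$1$ ends the block with exactly one kept interval, the innermost. We then need the algorithm's total kept count in the block to be at most $m+$ (something accounted against $P$).
\end{itemize}

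\textbf{Main obstacle.} Other machines may take the passed intervals, and machine-$1$'s chain intervals may migrate: the algorithm might place a new nested interval on another machine instead. So the real invariant should be stated in terms of loads, not block-local counts. The plan is to argue globally with \Cref{obs:ef1-loads} applied to final loads. Machine~$1$'s final load is at most one per block plus leftovers, so its final load is at most about $N+O(1)$. Every machine's final load is then at most $N+O(1)$, so $|\ALG(\I)|\le mN+O(m)$.

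On the $\OPT$ side we must show $|\OPT(\I)|\ge (2m-1)N$. For this, each block needs exactly $m-1$ passed intervals, or more, which only helps. This holds if the block length is ``forced'' appropriately. A possible fix: treat the adversary's release count per block as capped. If machine~$1$ stalls, so that the number of passed intervals exceeds some threshold, the adversary simply stops the whole instance. At that point $\OPT$ has gained many free intervals, while the algorithm's total is capped by $m(\text{load}_1+1)$, which already yields the ratio.

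Making this case analysis airtight, particularly bounding the algorithm's loads via machine~$1$ when revocations occur on other machines, is where I expect most of the work. The dyadic halving must also allow unboundedly many releases; rational endpoints are fine.
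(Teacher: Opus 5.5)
Your adversary, the stalling safeguard, and the global EF1 bound $|\ALG(\I)|\le mN+m-1$ match the paper's proof. Your $\OPT$ schedule is also a valid variant of the paper's: put all passed intervals on the machine carrying the innermost success $S_m$, which works because every passed interval of the block is released before $S_m$ and hence lies strictly before it.

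The gap is in how you combine the bounds. You aim for $|\OPT(\I)|\ge(2m-1)N$, which needs about $m-1$ passed intervals per block. But the number $P$ of passed intervals is chosen by the algorithm, and nothing forces it to be large. If machine~$1$ accepts every release and then revokes its innermost success before it ends, EF1 holds throughout, $P=0$ in every block, and $|\OPT(\I)|=mN$. If machine~$1$ keeps the innermost success only in every other block, EF1 lets the algorithm manage with $\sum P\approx(m-1)N/2$. Your proposed fix, capping the number of releases, only addresses the opposite situation (too many passed intervals). In these runs the ratio is still large, but only because $|\ALG(\I)|$ drops as well. So an absolute lower bound on $|\OPT(\I)|$ paired with an absolute upper bound on $|\ALG(\I)|$ cannot work; you need an inequality that couples the two.

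The missing step is the block-local difference bound, which you abandoned because of the migration worry. That worry is moot. By your own rule, every release that is not running on machine~$1$ right after its arrival is a passed interval, and the window moves past it. The model allows revocation but not reassignment, since a revoked interval can never be scheduled again. So the only intervals of a block that can ever sit on machines other than machine~$1$ are passed intervals, while machine~$1$ keeps at most one member of the nested chain.

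Hence a completed block contributes at most $1+a$ intervals to $\ALG(\I)$, where $a\le P$ is the number of passed intervals the algorithm keeps, and it contributes $m+P\ge m+a$ intervals to the optimal schedule. Summing over blocks gives $|\OPT(\I)|\ge|\ALG(\I)|+(m-1)N$. Combining this with $|\ALG(\I)|\le mN+m-1$ yields $|\OPT(\I)|/|\ALG(\I)|\ge 1+\frac{(m-1)N}{mN+m-1}\to 2-\frac1m$, which is exactly how the paper concludes.
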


\begin{proof}
Fix any deterministic online EF1 algorithm $\ALG$ and an arbitrary machine, say machine~$1$.
We construct an adversarial input instance $\I$ consisting of $B$ \emph{recurring blocks}, where $B$ can be taken arbitrarily large.
In this instance, all intervals have unit weight.
For readability, we describe the construction with some coinciding endpoints (e.g., consecutive intervals sharing a left endpoint, or an interval starting exactly at the right endpoint of an earlier one; since intervals are half-open, the latter creates no conflict).
Strictly speaking, each start time is perturbed by a distinct, sufficiently small amount, so that the intervals of each block arrive one at a time, in the stated order; in particular, the adversary observes the decision of $\ALG$ on each interval before choosing the next one.
For a small enough perturbation, this changes no conflict between any two released intervals, and all claims below are unaffected.
We suppress the perturbation in what follows.

\paragraph{EF1 as a count-balance invariant.}
By \Cref{obs:ef1-loads}, in the unweighted setting a schedule is EF1 if and only if the loads of any two machines differ by at most one; since $\ALG$ maintains EF1 at all times, this holds at every point of its execution.
Formally, letting $C_i(t)$ be the number of intervals currently assigned to machine $i$ right after time~$t$, we have
\begin{equation}\label{eq:ef1-counts-online-lb}
\max_{i\in\A} C_i(t)-\min_{i\in\A} C_i(t)\le 1 \qquad \text{for all times $t$}.
\end{equation}

\paragraph{Recurring blocks.}
Each block starts at some time $T$ strictly after the end of all previously released intervals (so blocks do not overlap in time), and all its intervals are contained in the window $[T,T+1)$.
Within a block, the adversary maintains an \emph{active region} $[s,e)$, initialized to $[T,T+1)$, and repeats the following process.

\begin{enumerate}
\item Release the interval $I=\bigl[s,\tfrac{s+e}{2}\bigr)$, spanning the first half of the active region.
\item If $\ALG$ accepts $I$ on machine~$1$, call $I$ a \emph{success}, update the active region to $I$ (the first half of the old active region), and go back to step~1.
\item Otherwise (i.e., if $I$ is accepted on another machine or not accepted at all), call $I$ a \emph{failure}, update the active region to $\bigl[\tfrac{s+e}{2},e\bigr)$ (the part strictly after $I$), and go back to step~1.
\end{enumerate}

The block ends as soon as $m$ successes have occurred.
In addition, if the number of intervals released in a block reaches $K:=4m(B+1)$, the adversary stops releasing altogether and the instance ends; we show below that this safeguard against stalling only helps the adversary.
We emphasize that the classification is determined by the observed behavior of $\ALG$: \emph{every} interval that $\ALG$ accepts on machine~$1$ is a success, so machine~$1$ never schedules a failure.
Since the active region only shrinks, and each update places it either inside the released interval or strictly after it, we obtain the following structural property, illustrated in \Cref{fig:online-lb-illustration}.

\begin{claim}\label{claim:structure}
Within a block, every released interval is contained in every earlier success, and lies strictly after every earlier failure.
In particular, the successes of a block form a nested chain $S_1\supset S_2\supset\cdots$, ordered by release time.
\end{claim}

The design creates the following trap.
To accept a new success, machine~$1$ must revoke the previous one, since the new success is nested inside it; so out of its $m$ successes in a block, machine~$1$ keeps only the innermost, and the other $m-1$ are lost to revocations.
The optimum, in contrast, keeps the entire nested chain by spreading it over the $m$ machines, and every failure fits into its schedule as well.
Refusing to play does not help $\ALG$: if machine~$1$ stops accepting, the block never ends, the other machines can absorb only what \eqref{eq:ef1-counts-online-lb} permits, and every further release is rejected, a pure gift to the optimum.

\begin{claim}\label{claim:ALGm}
In every block, at most one released interval ever completes on machine~$1$.
\end{claim}
\begin{proof}
Machine~$1$ only ever schedules successes.
Suppose a success $S_i$ completes on machine~$1$, and consider any success $S_j$ released later in the block.
By \Cref{claim:structure}, $S_j$ is contained in the active region at its release, which is contained in $S_i$; hence $S_j$ arrives strictly before the right endpoint of $S_i$, at a moment when $S_i$ is still scheduled on machine~$1$.
Since $S_j\subset S_i$, accepting $S_j$ on machine~$1$ requires revoking $S_i$, contradicting that $S_i$ completes there.
So once a success completes on machine~$1$, no later success exists, and machine~$1$ completes at most one interval per block.
\end{proof}

\begin{claim}\label{claim:OPT2m-1}
For each block, the offline optimum without fairness accepts every released interval.
If the block completes with $m$ successes and $k$ failures, this amounts to $m+k$ intervals.
\end{claim}
\begin{proof}
Consider a block with successes $S_1\supset\cdots\supset S_q$ (where $q\le m$) and place $S_i$ on machine $i$.
For a failure $F$, let $j$ be the number of successes released before $F$, and note that $j\le m-1$, since the block ends at the $m$-th success; place $F$ on machine $j+1$.
We verify that no two intervals on a machine conflict.
Any two failures on machine $j+1$ have the same number $j$ of earlier successes, and by \Cref{claim:structure} the later one lies strictly after the earlier one.
A failure $F$ on machine $j+1$ was released before the success $S_{j+1}$, so again by \Cref{claim:structure}, $S_{j+1}$ lies strictly after $F$.
Since blocks are disjoint in time, this schedule is feasible and accepts every released interval of the instance.
\end{proof}

\paragraph{Competitive ratio lower bound.}
Assume first that all $B$ blocks complete.
Consider a block with $k$ failures, of which $\ALG$ accepts $a\le k$ on machines other than machine~$1$.
By \Cref{claim:ALGm}, the block contributes at most $1+a$ intervals to the final schedule of $\ALG$, while by \Cref{claim:OPT2m-1} it contributes $m+k\ge m+a$ intervals to the optimum; the difference is at least $m-1$.
Summing over blocks,
\[
|\OPT(\I)|\;\ge\;|\ALG(\I)|+(m-1)B .
\]
Moreover, by \Cref{claim:ALGm} the count of machine~$1$ at the end of the instance is at most $B$, so by \eqref{eq:ef1-counts-online-lb} every other machine holds at most $B+1$ intervals, and therefore
\[
|\ALG(\I)|\;\le\;B+(m-1)(B+1)\;=\;mB+m-1 .
\]
(If $|\ALG(\I)|=0$, the fair competitive ratio is unbounded, so assume $|\ALG(\I)|\ge 1$.)
Combining the two displays,
\[
\frac{|\OPT(\I)|}{|\ALG(\I)|}\;\ge\;1+\frac{(m-1)B}{mB+m-1}
\;\xrightarrow[B\to\infty]{}\;2-\frac{1}{m}.
\]
It remains to consider the case that some block hits the safeguard, i.e., releases $K=4m(B+1)$ intervals without completing.
The bound $|\ALG(\I)|\le m(B+1)$ holds at every point of the execution, by the same argument as above applied to the blocks so far.
Within the stalled block, at most $m$ released intervals are successes, and at most $(m-1)(B+1)$ are failures accepted by the other machines, since each such acceptance adds to the count of a machine other than machine~$1$, and these counts never exceed $B+1$.
Hence at least $K-m-(m-1)(B+1)\ge 2m(B+1)$ released intervals are accepted by no one, and by \Cref{claim:OPT2m-1} all of them are accepted by the optimum, so
\[
\frac{|\OPT(\I)|}{|\ALG(\I)|}\;\ge\;\frac{2m(B+1)}{m(B+1)}\;=\;2\;\ge\;2-\frac1m .
\]
In both cases, taking $B\to\infty$ completes the proof.

\end{proof}

\begin{figure}[!t]
  \centering
  \begin{tikzpicture}[x=1.55cm,y=0.9cm]
    \def\Xmax{3.8}
    \def\eps{0.04}
    \machineRow{3}{Machine 1}{\Xmax}
    \machineRow{2}{Machine 2}{\Xmax}
    \machineRow{1}{Machine 3}{\Xmax}
    \drawInterval[MyBlue]{dashed}{0+\eps}{3.5-\eps}{3}
    \node[MyBlue,font=\scriptsize] at (1.75,3.28) {$S_1$};
    \drawInterval[MyRed]{solid}{0+\eps}{1.75-\eps}{2}
    \node[MyRed,font=\scriptsize] at (0.85,2.28) {$F_1$};
    \drawInterval[MyBlue]{dashed}{1.75+\eps}{3.1-\eps}{2}
    \node[MyBlue,font=\scriptsize] at (2.42,2.28) {$S_2$};
    \drawInterval[MyRed]{solid}{1.75+\eps}{2.4-\eps}{1}
    \node[MyRed,font=\scriptsize] at (2.07,1.28) {$F_2$};
    \drawInterval[MyBlue]{solid}{2.4+\eps}{2.9-\eps}{1}
    \node[MyBlue,font=\scriptsize] at (2.65,1.28) {$S_3$};
    \draw[gray,dotted] (1.75,1.0) -- (1.75,3.0);
    \draw[gray,dotted] (2.4,1.0) -- (2.4,2.0);
    \begin{scope}[shift={(4.25,0)}]
      \drawInterval[MyBlue]{solid}{0}{0.5}{2.6}
      \node[right,font=\scriptsize] at (0.56,2.6) {success (accepted by machine 1)};
      \drawInterval[MyRed]{solid}{0}{0.5}{2.0}
      \node[right,font=\scriptsize] at (0.56,2.0) {failure};
      \drawInterval[black]{dashed}{0}{0.5}{1.4}
      \node[right,font=\scriptsize] at (0.56,1.4) {revoked by $\ALG$};
    \end{scope}
  \end{tikzpicture}
  \caption{One block for $m=3$ with two failures, shown in the optimal schedule, which accepts all $2m-1=5$ intervals: $S_i$ on machine $i$ and each failure before the next success.
  The successes are nested, so accepting each new one forces machine~$1$ to revoke the previous one (dashed), and $\ALG$ keeps only the three solid intervals.
  Repeating the block yields the ratio $\tfrac{5}{3}=2-\tfrac{1}{m}$.}
  \label{fig:online-lb-illustration}
\end{figure}

\section{Experimental Evaluation}
We complement our theoretical analysis with an empirical evaluation of the Greedy-Balanced algorithm using real-world benchmark instances. 
Our theoretical guarantees focus on worst-case scenarios and are adversarial in nature, which means they may not necessarily reflect typical performance. 
In this section, we will examine how the algorithm behaves on realistic inputs and compare its empirical performance against the worst-case bounds we established earlier.

\textbf{Dataset.}
We evaluate Greedy-Balanced using benchmark instances derived from scheduling traces of computing clusters from the Parallel Workloads Archive~\citep{workloadarchive,feitelson_experience_2014,goos_benchmarks_1999}. 
This dataset has been used in previous experimental studies on interval scheduling algorithms~\citep{antoniadis2025switchingframeworkonlineinterval,10.1007/978-3-031-38906-1_14}. 
The traces record jobs scheduled on parallel processors, with each job specified by a start time, an end time, and a number of requested processors. For our analysis, we treat each job as requiring a single machine and disregard the multiplicity of processors mentioned in the original logs. 
We vary the number of machines, running each instance for $m=1,2,\ldots,300$. 
Instances with more than $10^6$ jobs are excluded from consideration.
This results in a total of $33$ instances, ranging from $\num{18238}$ to $\num{728871}$ jobs. All experiments were conducted on standard CPU-based workstation. Since our evaluation prioritizes solution quality instead of runtime, hardware specifications do not affect the results.

\textbf{Results.}
As shown in \cref{fig:plot_distribution}, Greedy-Balanced consistently outperforms its worst-case theoretical guarantees in practice. 
The maximum competitive ratio observed across all instances and machine counts is $1.306$, which occurs for the instance \texttt{UniLu-Gaia-2014-2} with $m=49$ machines. The average of the maximum competitive ratio for each instance across all machine counts is $1.235$.
Overall, these results suggest that while fairness constraints impose an inherent efficiency loss in the worst case, their practical impact on typical instances is significantly smaller.

An interesting observation is that as the number of machines increases, the competitive ratio initially grows and then starts to decline
, as shown in Figure~\ref{fig:plot_machines}. Intuitively, adding more machines introduces additional fairness constraints, which can temporarily worsen the performance. However, having more machines also provides greater flexibility in scheduling, as more machines are idle at any given time, ultimately leading to improved performance.

\begin{figure}
  \begin{minipage}[t]{0.48\textwidth}
  \includegraphics[width=\textwidth]{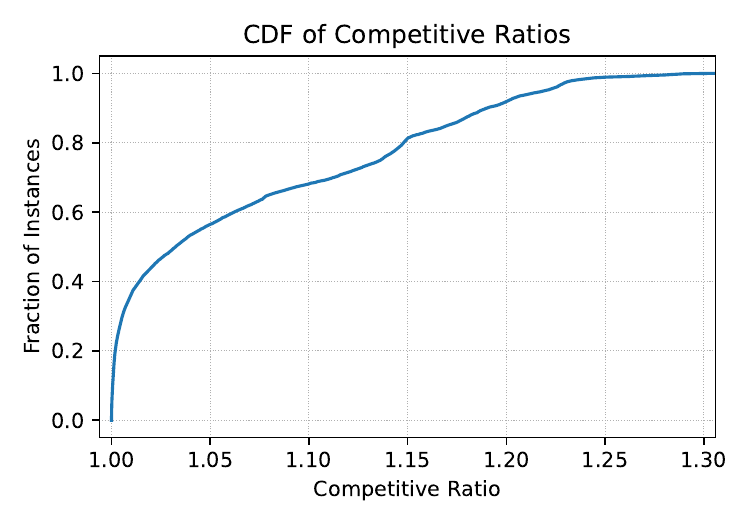}
  \caption{The distribution of competitive ratios across all instances for $m=25$, shown as the fraction of instances with competitive ratio at most a given value. In total $33$ instances were evaluated.
  \label{fig:plot_distribution}}
  \end{minipage}
\hfill
    \begin{minipage}[t]{0.48\textwidth}
    \includegraphics[width=\textwidth]{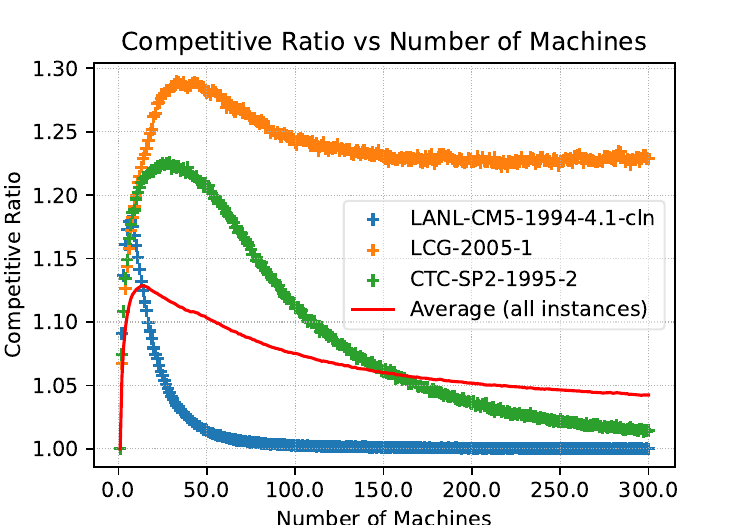}
  \caption{Average competitive ratio of \emph{Greedy-Balanced} over all instances as well as the ratio for three instances chosen to illustrate the variability across the dataset, as a function of the number of machines. 
  \label{fig:plot_machines}}
  
    \end{minipage} 
  \end{figure}

\FloatBarrier

\section{Discussion}\label{sec:discussion}

Our work advocates a particular way of incorporating fairness into optimization problems, namely by treating fairness notions as explicit constraints rather than as objectives to be optimized. 
This viewpoint leads to a different class of algorithmic questions: rather than asking whether a fair solution exists, we ask how much efficiency must be sacrificed to guarantee fairness, and whether this loss can be bounded by a constant. 
Our results show that, at least for interval scheduling, fairness can be enforced at a controlled and well-characterized cost, and that this cost persists even in structured settings. 
More broadly, we believe that this perspective provides a systematic way of integrating fairness considerations into scheduling, resource allocation, and other optimization problems.

Another insight emerging from our study is the importance of separating different sources of inefficiency. 
In the offline setting, we isolate the efficiency loss intrinsic to enforcing fairness itself, while in the online setting additional loss arises from uncertainty and irrevocable decisions. 
By comparing tight bounds across these settings, we distinguish which limitations stem from fairness and which are consequences of operating online. 
At the same time, our work leaves several natural directions open. 
We treat fairness as a hard constraint, whereas some applications may call for softer notions that allow controlled fairness violations in exchange for improved efficiency. 
We also focus on EF1, which coincides with EFX in our unweighted setting, while other fairness notions may exhibit qualitatively different algorithmic behavior.

Finally, our online results focus on the unweighted setting, since the weighted version of the general problem does not admit a constant competitive ratio. 
Nevertheless, weighted variants under additional structure, such as unit-length intervals with arbitrary weights, remain an interesting direction for future work. 
More generally, extending the fairness-as-constraint paradigm to other scheduling and online optimization problems may lead to a deeper understanding of the relationship between fairness, efficiency, and uncertainty. 
Moreover, our experimental evaluation suggests that the extremal instances driving the lower bounds may be rare in practice, indicating that fair algorithms could perform substantially better than their worst-case guarantees suggest.

\section*{Acknowledgments}

We thank Javier Cembrano for valuable discussions during the initial phase of this work. 
This collaboration was initiated at the Dagstuhl Seminar 24401, 
\emph{Fair Division: Algorithms, Solution Concepts, and Applications}. RV acknowledges support from DST
INSPIRE grant no. DST/INSPIRE/04/2020/000107, SERB/ANRF grant no. CRG/2022/002621,
and Mr. D.P. Gupta Chair Professorship.

\paragraph*{A note on the use of AI.}
An AI assistant (Claude, Anthropic; model \texttt{claude-fable-5}) helped with the preparation of this manuscript, drafting and revising parts of the exposition, producing the TikZ figures, and flagging typos and inconsistencies in earlier versions.
The content of the paper is the authors' own.
 
\clearpage

\bibliographystyle{plainnat}
\bibliography{ref}

\end{document}